%% file: obdd.tex
\documentclass{article}

\input{preamble.tex}
\begin{document}

\title{Upper and lower bounds
on the OBDD-width of a special integer multiplication}
\author{Tong Qin \thanks{\texttt{quantumrithm@gmail.com}}}
\date{\today}

\maketitle

\abstract{We consider the Boolean function $\smul(\vecx,\vecy)$, which computes the middle bit of the multiplication of two natural numbers represented as $n$-bit binary strings $\vecx$ and $\vecy$, drawn from a restricted domain. We investigate the width of OBDDs computing $\smul$. We introduce a combinatorially defined function $\Ssmul(n)$ and show that the width of such OBDDs is $\Theta(2^{\Ssmul(n)})$.}

\vspace{\baselineskip}
\keywords{OBDD, width, tight bounds, fooling set, path set
}

\input{1_introduction.tex}
\input{2_pre.tex}
\input{3_bound.tex}
\input{4_analysis.tex}

\input{5_conclusion.tex}
\bibliographystyle{elsarticle-harv}
\bibliography{UpperLowerOBDD}
\appendix
\setcounter{claim}{1}
\input{appendix.tex}		

\end{document}

%% file: preamble.tex
\usepackage{amsmath,amsthm,amssymb}
\usepackage{geometry}
\usepackage{graphicx}

\providecommand{\keywords}[1]{\noindent\textbf{Keywords:} #1}

\newcommand{\OMIT}[1]{}

\theoremstyle{definition}
\newtheorem{defi}{Definition}[section]

\newtheorem{lem}{Lemma}[section]
\newtheorem{thm}[lem]{Theorem}
\newtheorem{cor}[lem]{Corollary}
\newtheorem{claim}{Claim}

\newtheoremstyle{case}{}{}{}{}{}{:}{ }{}
\theoremstyle{case}
\newtheorem{case}{Case}
\newtheorem{subcase}{Case}[case]

\newcommand{\xvar}{\{x_0,\dots,x_{n-1}\}}
\newcommand{\Acal}{\mathcal{A}}
\newcommand{\Bcal}{\mathcal{B}}
\newcommand{\Bpi}{\mathcal{B}_{\pi}}
\newcommand{\Pcal}{\mathcal{P}}
\newcommand{\smul}{{\rm SMul}_{n-1}^n}
\newcommand{\mul}{{\rm Mul}_{n-1}^n}
\newcommand{\st}{{\rm Split}}
\newcommand{\wid}{{\rm Width}_{\pi}}
\newcommand{\vecx}{\mbox{\boldmath $x$}}
\newcommand{\vecy}{\mbox{\boldmath $y$}}
\newcommand{\vecw}{\mbox{\boldmath $\omega$}}
\newcommand{\vecwp}{\mbox{\boldmath $\omega'$}}
\newcommand{\id}{{\rm Idx}}
\newcommand{\Lcal}{\mathcal{L}}
\newcommand{\Rcal}{\mathcal{R}}
\newcommand{\args}{{\rm Args}}

\newcommand{\Sf}{s_*}
\newcommand{\Ssmul}{s_*}
\newcommand{\piopt}{\pi_{\rm opt}}
\newcommand{\pix}{\pi_{\rm x}}
\newcommand{\piwopt}{\pi_{\rm wopt}}
\newcommand{\widpiopt}{{\rm Width}_{\piopt}}
\newcommand{\widpix}{{\rm Width}_{\pix}}

%% file: 1_introduction.tex
\section{Introduction}\label{sec.intro}

A \emph{binary decision diagram} (BDD)
represents a Boolean function on $X_n=\xvar$
as a directed acyclic graph with the following structure: It has one root node, some internal nodes and two sink nodes. Each internal node (including the root node) is labeled by a variable in $X_n$ and has two outgoing edges labeled by $0$ and $1$ respectively. The two sink nodes have no outgoing edge and one is labeled by $0$ and the other is labeled by $1$. For a given input to $X_n$, that is, an assignment to variables in $X_n$, the {\em computation path} of BDD is a path from the root node to one of its sink nodes, leaving each node labeled by $x_i$ following the edge labeled by the value of $x_i$; the label of the sink node is regarded as the output of the computation path. We say that such a BDD \emph{computes} $f$ on $X_n$ if it outputs the value of $f$ for any input to $X_n$.
A \emph{variable ordering} on $X_n$ is
a bijection $\pi:X_n\rightarrow\{0,\dots, n-1\}$,
leading the ordered list $\pi^{-1}(0),\ldots,\pi^{-1}(n-1)$.
An ordered binary decision diagram (OBDD) is a special case of BDDs
where any path on the graph passes the nodes
in an order respecting the variable ordering $\pi$.
That is,
if an edge starts from a node labeled by $x_i$ to a node labeled by $x_j$,
then $\pi(i)<\pi(j)$.
Usually we assume that BDDs (including OBDDs) are \emph{reduced}, by which we mean the following two rules have been applied when constructing BDDs:
		\begin{itemize}
			\item Merge any two identical subgraphs.
			\item Eliminate any node whose two children are identical.
		\end{itemize}

		Notice that due to this reduction process for any Boolean function $f$ and for any variable ordering $\pi$, an OBDD that computes $f$ under the ordering $\pi$ is uniquely determined. We will refer this property as the \emph{uniqueness} of OBDD. Let $\Bpi(f)$ denote the OBDD for computing $f$ with ordering $\pi$. The \emph{size} of $\Bpi(f)$ is the number of its nodes. Let $\Bcal(f)$ denote the $\Bpi(f)$ of $f$ that has the minimum size among all variable orderings $\pi$. Below we use $|\Bpi(f)|$ and $|\Bcal(f)|$ to denote the size of $\Bpi(f)$ and $\Bcal(f)$ respectively.

		In this paper, we discuss the size of OBDDs for integer multiplication. The first size lower bound was given by Bryant \cite{Bryant_1986}. He showed that for any variable ordering $\pi$, there exists an index $k$ such that the size of $\Bpi$ for computing the $k$th bit of multiplication (a Boolean function denote by $\text{Mul}_{k}^{n}$) is at least $2^{n/8}$. In 1991, Bryant proved that $\Bpi(\mul)$ requires at least $2^{n/8}$ nodes for any variable ordering $\pi$ \cite{Bryant_1991}. The current best lower bound for $\mul$, which is $2^{n/2}/61$, was showed by Woelfel in 2005 \cite{Woelfel_2005} by using a universal family of hash functions. He also gave a non-trivial upper bound of $7/3\cdot 2^{4n/3}$ for $|\Bcal(\mul)|$. There is still a gap between the exponents of these upper and lower bounds, and we would like to close this gap to get lower and upper bounds with a ``matching exponent'', that is, a function $s(n)=\Omega(n)$ such that $2^{s(n)}$ $\le$ $|\Bcal(\mul)|$ $\le$ $p(n)2^{s(n)}$ holds for some polynomial $p(n)$. It is clear, from the above lower and upper bound results,
that a function $s_0(n)$ $:=$ $\log_2|\Bcal(\mul)|$ is a matching exponent.
What we want is a more concrete matching exponent function.

The purpose of this paper is to give a step
towards such a concrete matching exponent.
We introduce a function $\smul$
that is a quite restricted version of $\mul$,
and we derive a combinatorially defined function $\Ssmul(n)$
as a matching exponent.
To get sharp bounds,
we investigate the width of $\Bcal(\smul)$
and show that it is in fact $\Theta(2^{\Ssmul(n)})$,
from which we have
$2\cdot2^{\Ssmul(n)}-1$ $\le$ $|\Bcal(\smul)|$ $\le$ $n\cdot2^{\Ssmul(n)}$,
i.e., lower and upper bounds with $\Ssmul(n)$ as a matching exponent.

%% file: 2_pre.tex
\section{Preliminaries}

		We first prepare some notions and notation necessary for our technical discussion. Throughout this paper, we always use $f$ and $\pi$ to denote a Boolean function on $n$ variables and a variable ordering on $X_n$ unless otherwise stated. The range of a variable index $i$ is $\{0,\ldots,n-1\}$.

Consider the OBDD $\Bpi(f)$.
For a node labeled by $x_i$,
we define its \emph{level}
as the order of $x_i$ under $\pi$,
that is, $\pi(x_i)$.
We say that a node is \emph{before} (respectively, \emph{after}) $x_i$ if its level is smaller (larger) than the level of $x_i$. A node after $x_i$ may be a sink node.

\begin{defi}\label{def.lvl-wid}
For any variable index $i$,
let $E_{\pi}(i)$ be the set of edges starting from a node
before $x_i$ and ending at $x_i$ or a node after $x_i$.
The {\em width} of $x_i$ is the number of edges in $E_{\pi}(i)$.
(We use $\wid(i;f)$ to denote the width of $x_i$ in $\Bpi(f)$,
and simplify it as $\wid(i)$ when $f$ is clear from the context.) 
\end{defi}

For any variable index $i$,
a (input variable) {\em partition} w.r.t.\ $i$ (and $\pi$) is
a pair $(L,R)$ where $L$ (respectively, $R$) is the set of variables
before $x_i$ (respectively, the set of remaining variables).
Note that the first variable w.r.t.\ $\pi$ in $R$ is $x_i$.

		\begin{defi}\label{def.lrass}
			For a given partition $(L,R)$, define a \emph{left} (respectively, \emph{right}) \emph{assignment} $l:L\rightarrow\{0,1\}$ (resp., $r:R\rightarrow\{0,1\}$) as an assignment of Boolean values to the inputs in $L$ (resp., $R$). Let $l\cdot r$ denote the complete input assignment resulting from left assignment $l$ and right assignment $r$.
		\end{defi}

		We introduce two sets of left assignments that play a key role for investigating lower and upper bounds of $\wid(i)$.

		\begin{defi}\label{def.foolset}
For any partition $(L,R)$,
a set of left assignments $\mathcal{A}(L,R)$ is a \emph{fooling set} of $f$ if it satisfies that for any two distinct $l$ and $l'$ in $\mathcal{A}(L,R)$, there exists a right assignment $r$ such that $f(l'\cdot r)\neq f(l \cdot r)$.
		\end{defi}

		\begin{defi}\label{def.pathset}
For any partition $(L,R)$,
a set of left assignments $\Pcal(L,R)$ is a \emph{path set} if it satisfies that for any left assignment $l$, there exists a left assignment $l'\in \Pcal(L,R)$ such that for any right assignment $r$ we have $f(l\cdot r)=f(l'\cdot r)$.
		\end{defi}

		We have the following lemma that connects the above two sets with $\wid(i)$.

		\begin{lem}\label{lem.widthsize}
For any variable ordering $\pi$ and any variable index $i$,
consider the partition $(L,R)$ w.r.t. $i$. Then we have
$$|\Acal(L,R)|\le \wid(i) \le |\Pcal(L,R)|$$
\end{lem}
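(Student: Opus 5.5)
The plan is to treat the edge set $E_{\pi}(i)$ as a ``cut'' through $\Bpi(f)$ and relate it to left assignments through computation paths. For a left assignment $l$, follow the computation path of $\Bpi(f)$ on any input $l\cdot r$. Every node on this path before $x_i$ is labeled by a variable in $L$, so the path up to the first node at level $\ge \pi(x_i)$ (possibly a sink) depends only on $l$. Hence the path crosses the cut by exactly one edge $e(l)\in E_{\pi}(i)$, and $e(l)$ is independent of $r$. The root itself could already be at or after $x_i$; then $E_\pi(i)$ is empty and we treat the cut as a virtual edge into the root. This degenerate case, including when $f$ is constant, needs a brief separate remark, and I expect it to be the only fiddly point.

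\emph{Lower bound.} I show that the map $l\mapsto e(l)$ is injective on any fooling set $\Acal(L,R)$. Suppose $l\ne l'$ in $\Acal(L,R)$ with $e(l)=e(l')$. Both paths then reach the same node $v$ at or after $x_i$. From $v$ on, the computation reads only variables in $R$, so for every $r$ it proceeds identically for $l\cdot r$ and $l'\cdot r$. Thus $f(l\cdot r)=f(l'\cdot r)$ for all $r$, contradicting Definition~\ref{def.foolset}. Therefore $|\Acal(L,R)|\le \wid(i)$.

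\emph{Upper bound.} I show that every edge in $E_{\pi}(i)$ equals $e(l)$ for some $l$, and that edges are distinguished by subfunctions. Each node of a reduced OBDD is reachable from the root. Hence for an edge $(u,w)\in E_\pi(i)$, take a root-to-$u$ path and extend it by the edge. This path queries only variables before $x_i$, each at most once, so it fixes a partial assignment of $L$; extend it arbitrarily to a full $l$, giving $e(l)=(u,w)$. Next I show that distinct edges induce distinct subfunctions $r\mapsto f(l\cdot r)$. If two edges end at different nodes, uniqueness and reducedness imply that these nodes compute different functions of $R$. If they end at the same node from different sources $u\ne u'$, this distinction fails, so I would refine the argument: build the subfunction pairs more carefully by noting that distinct nodes $u,u'$ compute different subfunctions of the variables from their level on.

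This last step is the main obstacle, and it is not resolved here. If the subfunctions of $f$ coincide, then the counting genuinely compares edges with classes of left assignments. The intended count takes $\Pcal(L,R)$ to contain one representative per equivalence class of left assignments. The classes correspond to nodes reached, which injects into $E_\pi(i)$ via $e(\cdot)$ by the lower-bound argument, so $|\Pcal|$ bounds the number of distinct crossing targets. Equality with the edge count requires the paper's convention that the width counts distinct edge endpoints. I would reconcile the two by interpreting width as the number of distinct subfunctions realized at the cut, which is at most the size of any path set, since every class has a member in $\Pcal(L,R)$.
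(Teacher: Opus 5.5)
Your lower-bound argument is correct and is essentially the paper's. The paper shows that two fooling-set elements reach different nodes and concludes that $\Acal(L,R)\to E_\pi(i)$ is injective; you map $l$ directly to its crossing edge $e(l)$. Your worry about the degenerate case is justified and is a real defect, not a fiddly detail. If no node precedes $x_i$ (e.g.\ $\pi(x_i)=0$, or $f$ constant), then $E_\pi(i)=\emptyset$, yet any one-element set of left assignments is a fooling set. So the left inequality as stated fails there.

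The genuine gap is the upper bound, exactly where you put it, and it cannot be closed. With $\wid(i)=|E_\pi(i)|$ as in Definition~\ref{def.lvl-wid}, the inequality $\wid(i)\le|\Pcal(L,R)|$ is false. Take $f=x_0\wedge x_1\wedge x_2$ with the natural ordering and the cut at $x_2$. Then $E_\pi(2)$ consists of the $0$-edges of the $x_0$- and $x_1$-nodes (both into the $0$-sink) and the $1$-edge of the $x_1$-node, so $\wid(2)=3$. But the only subfunctions on $R=\{x_2\}$ are $0$ and $x_2$, so $\{(0,0),(1,1)\}$ is a path set of size $2$. For $x_0\wedge\dots\wedge x_{n-1}$ at the last cut, the ratio is $n/2$.

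So drop your planned refinement via distinct sources $u\ne u'$. All left assignments passing through edges into a common node have the same subfunction, so nothing can separate those edges. The paper's proof does not avoid this either: it shows that equivalent left assignments reach the same node ($e=e'$) and then asserts injectivity of $E_\pi(i)\to\Pcal(L,R)$, silently identifying edges with their endpoints.

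What your arguments actually prove is a corrected lemma. Replace $\wid(i)$ by the number of distinct nodes at or after $x_i$ entered by edges of $E_\pi(i)$, counting the root when nothing precedes $x_i$; equivalently, the number of distinct subfunctions $r\mapsto f(l\cdot r)$. Your injectivity argument gives the lower bound. Reachability plus reducedness puts these nodes in bijection with the equivalence classes of left assignments, each of which meets $\Pcal(L,R)$, which gives the upper bound.

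Present this as a correction of the statement, together with the counterexample, rather than as ``the paper's convention''. The paper's definition counts edges, and that is precisely why the literal upper bound fails.
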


		\begin{proof}
First, we prove the lower bound,
i.e., the left inequality.
Consider two different left assignments $l$ and $l'$ in $\Acal(L,R)$. Let $e$ and $e'$ be the nodes of $\Bpi(f)$ that are reached from the root following the left assignments $l$ and $l'$ respectively. Each of $e$ and $e'$ is either a node labeled by the variable in $R$ or a sink. Since $\Acal(L,R)$ is a fooling set of $f$, there is a right assignment $r$ such that $f(l\cdot r)\neq f(l'\cdot r)$. So the ends of the computation paths under $l\cdot r$ and $l'\cdot r$ are different. Therefore $e\neq e'$; two distinct left assignments in $\Acal(L,R)$ have to lead us to two different $e$ and $e'$ so that the paths after $e$ and $e'$ following the same $r$ would lead us to different sinks. Thus, the mapping: $\Acal(L,R) \longmapsto E_\pi(i) $ is injective and we have $\wid(i)\ge |\Acal(L,R)|$.

For proving the upper bound, i.e., the right inequality,
we use a similar idea.
Let $e$ and $e'$ be the same nodes defined above. Since for any left assignment $l$, we can find a $l'\in \Pcal(L,R)$ such that for any right assignment $r$, $f(l\cdot r)=f(l'\cdot r)$. So the computation paths from $e$ and $e'$ following any right assignment are the same in $\Bpi(f)$;
thus, $e=e'$ due to the uniqueness of OBDD,
and hence,
the mapping $E_\pi(i) \longmapsto \Pcal(L,R)$ is injective.
Therefore we have $\wid(i)\le |\Pcal(L,R)|$.
			\end{proof}

In this paper we focus on the width of OBDDs
to derive sharp upper and lower bounds.
What we want is
to give a combinatorial definition of $\Sf(n)$
and constants $c_1,c_2>0$
that satisfies the following two bounds:
\begin{equation}\label{eqn.flower}
\begin{array}{l}
\mbox{$\forall$ variable ordering $\pi$,
$\exists$ $i$ such that}\\
\hspace*{5mm}
c_12^{\Sf(n)}\le\wid(i),
\end{array}
\end{equation}
and
\begin{equation}\label{eqn.fupper}
\begin{array}{l}
\mbox{$\exists$ variable ordering $\pi$, $\forall$ $i$, we have}\\
\hspace*{5mm}
\wid(i)\le c_22^{\Sf(n)}.
\end{array}
\end{equation}

Note that
these width bounds are polynomially related to size bounds.
		
		\begin{lem}\label{lem.obdd_vk}
			For any variable index $i$, we have
			$$|\Bpi(f)|\ge 2\wid(i)-1$$
			On the other hand, we have
			$$|\Bpi(f)|\le \sum_{i=0}^{n-1} \wid(i)$$
		\end{lem}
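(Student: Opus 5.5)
The two inequalities of Lemma~\ref{lem.obdd_vk} are handled separately, in both cases by counting nodes of $\Bpi(f)$ through the edges in the sets $E_\pi(i)$.

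\emph{Lower bound.} Fix $i$ and put $w=\wid(i)=|E_\pi(i)|$. Every edge in $E_\pi(i)$ starts at a node before $x_i$. Each such node has at most two outgoing edges, so there are at least $\lceil w/2\rceil$ distinct start nodes. The end nodes are at $x_i$ or after it, possibly sinks. I plan to count nodes as in a binary tree: cut $\Bpi(f)$ at the level of $x_i$ and view the part before $x_i$ as a DAG rooted at the root whose outgoing edges are exactly the $w$ edges of $E_\pi(i)$. Every internal node of that part has out-degree $2$, because reduction removes nodes whose two children coincide and we may count parallel edges. If $m$ is the number of nodes before $x_i$, then the total number of edges leaving those nodes is $2m$. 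These edges have in-degree spread over the nodes before $x_i$ (excluding the root, each needing at least one incoming edge) and over the $w$ crossing edges. Hence $2m \ge (m-1)+w$, that is, $m\ge w-1$.

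Next I need at least $w$ further nodes at or after $x_i$. Distinct edges may share an endpoint, so this is not automatic from $E_\pi(i)$ alone. The fix I would use is the standard one: the endpoints of $E_\pi(i)$ are nodes, and reachability gives at least two sinks, which yields only $m+2$. The intended count is probably that the sets of nodes before $x_i$ and from $x_i$ onward together give $2w-1$. If this counting turns out to be loose, I would fall back on summing $m\ge w-1$ with $w$ end nodes, justified by identifying $\wid(i)$ with the number of distinct subfunctions. That identification holds when the edges of $E_\pi(i)$ lead to distinct nodes, which is the sense used in Lemma~\ref{lem.widthsize}. It gives $|\Bpi(f)|\ge (w-1)+w=2w-1$. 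The edge case $i$ first in $\pi$ gives $w=0$ edges or a trivial bound.

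\emph{Upper bound.} Every non-root node $v$ labeled $x_j$ (or a sink) has an incoming edge from a node $u$ before it. Map $v$ to one such edge. I would choose $i$ as the variable labeling $v$, or, for sinks, attribute them within the last level. That edge lies in $E_\pi(i)$, so each node injects into $\bigcup_i E_\pi(i)$ with distinct targets. The root and the sinks need extra care. The root is counted because its own outgoing edges lie in $E_\pi(\cdot)$ for the next level. To keep the count within $\sum_i \wid(i)$, I would charge each node at level $\pi(x_i)$ to a distinct edge of $E_\pi(i)$, namely one ending at it, and charge the sinks together with the root via the last level's crossing edges.

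The main obstacle is making this charging exactly injective while accounting for the root and the two sinks. I expect the lower bound to require the same care over distinct end nodes.
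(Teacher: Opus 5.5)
\textbf{There is a genuine gap, and it cannot be closed: both inequalities are false as stated, and the paper's own proof has the same holes.}

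\textbf{Lower bound.} You follow the same decomposition as the paper: nodes before $x_i$ play the role of internal nodes, and targets of $E_\pi(i)$ play the role of leaves. Your edge count $m\ge \wid(i)-1$ for the nodes before $x_i$ is correct. The step you flagged, $\wid(i)$ further distinct nodes at or after $x_i$, fails because Definition~\ref{def.lvl-wid} counts \emph{edges}, and under that definition the first inequality is false. Take $f=x_0\wedge\cdots\wedge x_{n-1}$ with the identity ordering. Then $\Bpi(f)$ is a chain of $n$ nodes plus two sinks, so $|\Bpi(f)|=n+2$. But $E_\pi(n-1)$ contains the $n-1$ zero-edges from $x_0,\ldots,x_{n-2}$ into the $0$-sink plus the $1$-edge into $x_{n-1}$. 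Hence $\wid(n-1)=n$, and $2\wid(n-1)-1=2n-1>n+2$ for $n\ge 4$. The paper's proof treats the union of root paths to $E_\pi(i)$ as a tree with $\wid(i)$ distinct leaves, which breaks as soon as edges share endpoints.

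Your fallback is the right repair, but it changes the definition rather than proving the lemma as written. Let $w'(i)$ be the number of distinct endpoints of $E_\pi(i)$; when some node precedes $x_i$, this equals the number of distinct subfunctions $f|_l$. Then $m\ge|E_\pi(i)|-1\ge w'(i)-1$, and adding the $w'(i)$ endpoints gives $|\Bpi(f)|\ge 2w'(i)-1$. This is also the reading the right inequality of Lemma~\ref{lem.widthsize} needs. With edges it fails in the same example: $\{0^{n-1},1^{n-1}\}$ is a path set of size $2$ at $x_{n-1}$, while $\wid(n-1)=n$.

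\textbf{Upper bound.} Charging each non-root node at level $\pi(x_i)$ to a distinct incoming edge in $E_\pi(i)$ is correct; it is exactly the paper's ``at most $\wid(i)$ nodes per level'' argument. The root and the sinks cannot be absorbed, and no rearrangement of the charging will fix this, because the inequality fails in degenerate cases. If $f$ depends only on the last variable of $\pi$, no node lies before the root, so every $E_\pi(i)$ is empty: the right-hand side is $0$ while $|\Bpi(f)|=3$. Similarly, AND with $n=2$ gives $4>2$.

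Concretely, your plan to count the root via its outgoing edges double-charges edges already assigned to the nodes they enter. A sink entered only from last-level nodes is the endpoint of no edge in any $E_\pi(i)$, since a last-level node is never before any $x_i$. What your argument does establish is $|\Bpi(f)|\le 3+\sum_{i}\wid(i)$, also with $w'$ in place of $\wid$. That additive constant is harmless for Corollary~\ref{cor.obddsize}.
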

			
		\begin{proof}
			Consider any variable index $i$. From the definition of $\wid(i)$, we can identify $\wid(i)$ edges, i.e., the edges of $E_{\pi}(i)$, in $\Bpi(f)$. Consider subtree of $\Bpi(f)$ consisting of paths from the root to each edge of $E_{\pi}(i)$. Since each node of this subtree has at most two children and there are $\wid(i)$ leaves, there must be at least $2\wid(i)-1$ nodes in this subtree, which implies the lower bound of the lemma.
			
			On the other hand, since in every level of $\Bpi(f)$, there are at most $\wid(i)$ nodes, $|\Bpi(f)|\le \sum_{i=0}^{n-1} \wid(i)$.	
		\end{proof}

As a corollary of this lemma,
the following relation is immediate.

		\begin{cor}\label{cor.obddsize}
Suppose that
we have some function $\Sf(n)$ and constants $c_1,c_2>0$ satisfying
the conditions (\ref{eqn.flower}) and (\ref{eqn.fupper}).
Then we have $2c_12^{s(n)}-1$ $\le$ $|\Bcal(f)|$ $\le$ $c_2n2^{s(n)}$.
		\end{cor}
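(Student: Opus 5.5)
The corollary follows by combining conditions (\ref{eqn.flower}) and (\ref{eqn.fupper}) with Lemma~\ref{lem.obdd_vk}. I treat the two inequalities separately, writing $s(n)$ for $\Sf(n)$.

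\emph{Lower bound.} Let $\pi$ be the ordering attaining $\Bcal(f)=\Bpi(f)$. Condition (\ref{eqn.flower}) holds for every ordering, so it holds for this $\pi$. It gives an index $i$ with $\wid(i)\ge c_12^{s(n)}$. The first inequality of Lemma~\ref{lem.obdd_vk} then yields
\[
|\Bcal(f)|=|\Bpi(f)|\ge 2\wid(i)-1\ge 2c_12^{s(n)}-1 .
\]
The point to be careful about is the order of quantifiers. The lower bound must hold for the optimal ordering, which we do not know in advance. This is why (\ref{eqn.flower}) has to be stated for all orderings; with that form, the choice of $\pi$ costs nothing.

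\emph{Upper bound.} Let $\pi$ be the ordering whose existence is guaranteed by (\ref{eqn.fupper}). For this $\pi$, every index satisfies $\wid(i)\le c_22^{s(n)}$. The second inequality of Lemma~\ref{lem.obdd_vk} gives
\[
|\Bpi(f)|\le\sum_{i=0}^{n-1}\wid(i)\le n\,c_22^{s(n)} .
\]
Since $\Bcal(f)$ has minimum size over all orderings, $|\Bcal(f)|\le|\Bpi(f)|\le c_2n2^{s(n)}$.

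There is no real obstacle. The argument only requires matching each quantifier structure to the right use of Lemma~\ref{lem.obdd_vk}: the universal condition (\ref{eqn.flower}) with the single-level lower bound, and the existential condition (\ref{eqn.fupper}) with the summed upper bound.
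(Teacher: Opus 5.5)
Your proof is correct and follows the paper's argument: both bounds come from combining (\ref{eqn.flower}) and (\ref{eqn.fupper}) with Lemma~\ref{lem.obdd_vk}, and the lower bound is applied to the optimal ordering. Your upper bound is more careful than the paper's, which asserts the width bound of (\ref{eqn.fupper}) directly for the optimal ordering $\piopt$ even though (\ref{eqn.fupper}) only guarantees it for some ordering; going through the ordering supplied by (\ref{eqn.fupper}) and then using the minimality of $|\Bcal(f)|$, as you do, is the justified form of that step.
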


\OMIT{
		\begin{proof}
Let $\piopt$ be a variable ordering that defines $\Bcal(f)$,
i.e., $\Bcal(f)=\Bcal_{\piopt}(f)$.
From the assumption,
there exists some $i_*$
such that $c_12^{s(n)}$ $\le$ $\widpiopt(i_*)$.
Then from Lemma~\ref{lem.obdd_vk},
we have $2c_12^{s(n)}-1$ $\le$ $|\Bcal_{\piopt}(f)|$ $=$ $|\Bcal(f)|$.
On the other hand, from our assumption,
we have $\widpiopt(i)$ $\le$ $c_22^{s(n)}$ for all $i$.
Thus,
from Lemma~\ref{lem.obdd_vk},
we have $|\Bcal(f)|$ $=$ $|\Bcal_{\piopt}(f)|$ $\le$ $c_2n2^{s(n)}$.
\end{proof}}

%% file: 3_bound.tex
\section{Upper and lower bound analysis}

First,
we recall the definition of the Boolean function
based on the standard integer multiplication
that has been investigated in the literature.
Below we use
$\vecx=(x_0,\ldots,x_{n-1})$ and $\vecy=(y_0,\ldots,y_{n-1})$
to denote respectively a sequence of $x$-variables and $y$-variables.

\begin{defi}\label{def.mul}
\[
\mul(\vecx,\vecy)=(X_n\cdot Y_n)_{n-1},
\]
where $X_n:=\sum_{k=0}^{n-1}x_k 2^k$,
$Y_n:=\sum_{k=0}^{n-1}y_k 2^k$,
and $(X_n \cdot Y_n )_{n-1}$ denotes the value of the $n$th bit,
i.e., {\em the middle bit},
of the result of the multiplication of $X_n$ and $Y_n$.
\end{defi}

                We define our target Boolean function that is restricted on some specific inputs.
		
		\begin{defi}\label{def.smul}
			\[
				\smul(\vecx,\vecy)=\mul(\vecx,\vecy),
			\]
			if $|\{i|y_i=1\}|=2$, and $\smul(\vecx,\vecy)=0$ otherwise.
		\end{defi}

Here let us fix an assignment to $y$-variables.
For any $a<b$ $\in$ $\{0,\ldots,n-1\}$,
we consider the case
where two variables $y_{n-a-1}$ and $y_{n-b-1}$ are assigned $1$,
and the other $y_i$'s are assigned $0$.
Let $h=b-a$.
We use $\smul(\vecx;a,b)$
to denote $\smul$ under this partial assignment.
Then we have the following equation,
which will be referred as equation (\ref{eqn.smul}).
(See Figure \ref{fig.mul} for the meaning of $m$ in the equation.)
\begin{equation}\label{eqn.smul}
\begin{array}{l}
\smul(\vecx;a,b) =
\left\{
\begin{array}{l}
x_a\oplus x_b\oplus x_m, 
\mbox{ if $x_i=x_{i-h}$ for some $i\in H:=\{h,\ldots,b-1\}$}\\
\hspace*{21mm}
\mbox{(where $m=\max\{i\in H|x_i=x_{i-h}\}$}),\\
x_a\oplus x_b,
\hspace{8mm}
\mbox{ otherwise.}
\end{array}\right.
\end{array}
\end{equation}

\begin{figure}[h]
\hspace{2.5cm}\includegraphics[width=0.6\textwidth]{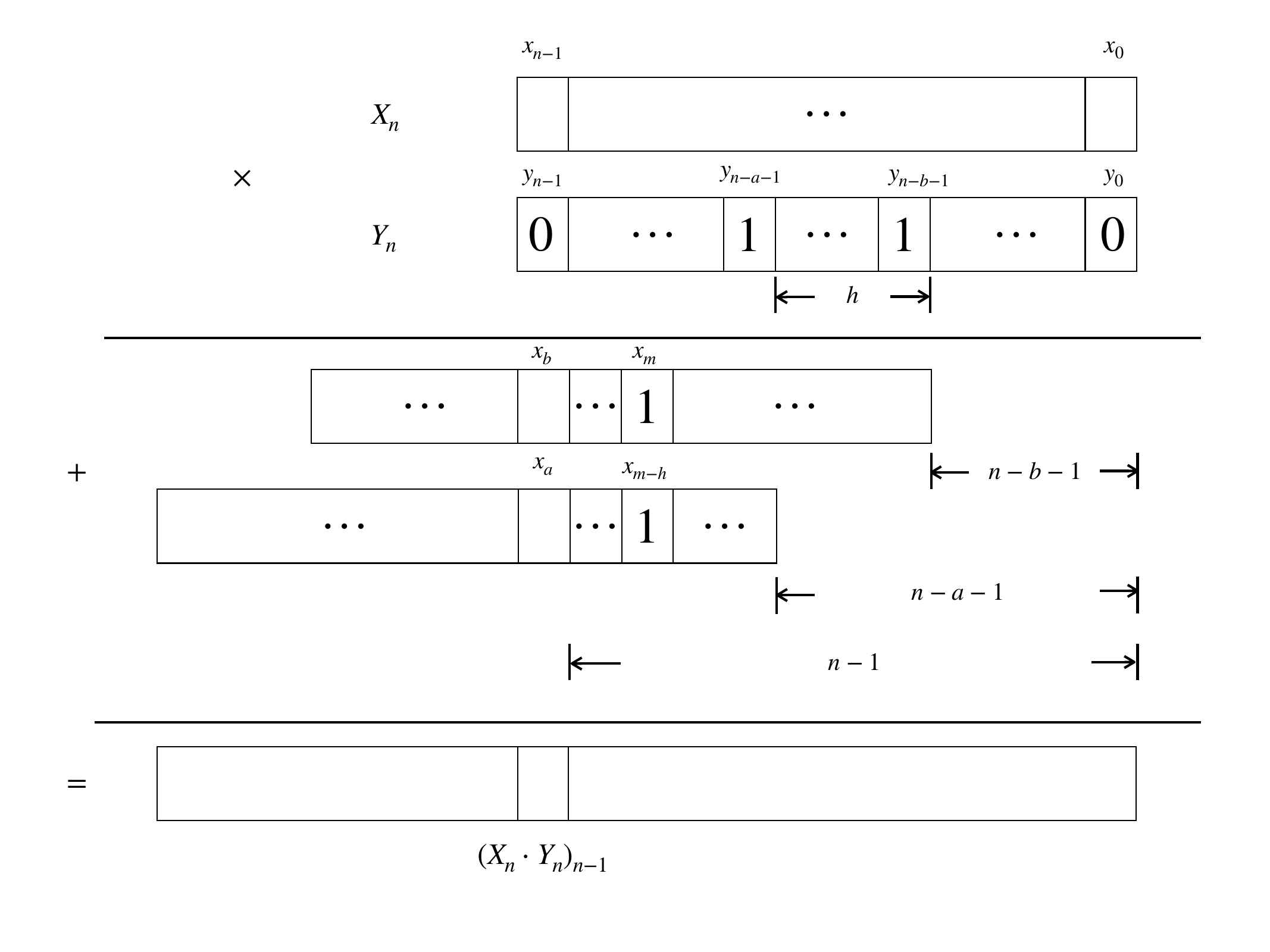}
\caption{$\smul$}
\label{fig.mul}
\end{figure}

As stated in equation (\ref{eqn.smul}) and illustrated in Figure \ref{fig.mul},
a carry to the middle bit occurs only if $x_m=x_{m-h}=1$.
Furthermore,
for defining fooling sets and path sets,
an important case
(for a given partition $(L,R)$ of $x$-variables)
is where $x_m\in L$ and $x_{m-h}\in R$,
or $x_m\in R$ and $x_{m-h}\in L$.
With this motivation,
we define the following set.

\begin{defi}\label{def.split}
For any partition $(L,R)$ of $x$-variables,
\[
\begin{array}{ll}
\st(L,R;a,b)= & \{p\in\{0,\ldots,h-1\}\,|\,\exists j\in\{0,\ldots,\lfloor(b-p)/h\rfloor \text{ such that} \\ 
& \text{either } x_{k(j,p)}\in L$ \text{ and } $x_{k(j+1,p)}\in R \text{ or } x_{k(j,p)}\in R \text{ and } x_{k(j+1,p)}\in L\}
\end{array}
\]
where $k(j,p)=b-p-jh$.
\end{defi}

Here for the sake of our later discussion,
we express
the index of $x$-variables by $k(j,p)$ and $k(j+1,p)$;
it is,
however,
nothing but $i$ and $i-h$ for some $i\in H$.
Indeed,
the following lemma guarantees that
each index $i$ uniquely determines
a pair $(j,p)$ such that $k(j,p)=i$ holds.

\begin{lem}\label{lem.biject}
For any $i\in \{0,\ldots,b\}$,
there exists a unique pair $(j,p)$ such that $k(j,p)=i$.
(In the following,
we use $k^{-1}(i)$ to denote this pair $(j,p)$.)
\end{lem}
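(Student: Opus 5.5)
The plan is to reduce Lemma~\ref{lem.biject} to division with remainder. Here $k(j,p)=b-p-jh$ with $p\in\{0,\ldots,h-1\}$ and $j\ge 0$ an integer. We have $h=b-a\ge 1$ since $a<b$. For $i\in\{0,\ldots,b\}$, set $d:=b-i$, so that $0\le d\le b$. The equation $k(j,p)=i$ is then equivalent to $d=jh+p$ with $0\le p\le h-1$, which is exactly the Euclidean division of $d$ by $h$.

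For existence, I will take $j:=\lfloor d/h\rfloor$ and $p:=d-jh$. The division algorithm gives $0\le p\le h-1$ and $j\ge 0$, so $p$ lies in the allowed range $\{0,\ldots,h-1\}$. I also need $j$ to lie in the range used in Definition~\ref{def.split}, namely $0\le j\le \lfloor (b-p)/h\rfloor$. Since $jh=d-p=b-i-p\le b-p$, we get $j\le (b-p)/h$, and because $j$ is an integer, $j\le\lfloor (b-p)/h\rfloor$. By construction, $k(j,p)=b-p-jh=b-d=i$.

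For uniqueness, suppose $k(j,p)=k(j',p')=i$ with both $p,p'\in\{0,\ldots,h-1\}$. Then $jh+p=j'h+p'$, so $p\equiv p' \pmod h$. Since $|p-p'|<h$, this forces $p=p'$, and then $jh=j'h$ gives $j=j'$ because $h\ge 1$.

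There is no real obstacle here. The only point needing care is to state precisely the domain of pairs $(j,p)$: namely $p\in\{0,\ldots,h-1\}$, $j$ a nonnegative integer, and $j\le\lfloor (b-p)/h\rfloor$, which is the range implicit in Definition~\ref{def.split}. With that fixed, uniqueness does not depend on the upper bound on $j$, and existence only requires checking that this upper bound holds, which I did above. This also makes the notation $k^{-1}(i)$ well defined.
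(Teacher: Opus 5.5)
Your proof is correct. The uniqueness half is the same as the paper's: $jh+p=j'h+p'$ with $|p-p'|\le h-1$ forces $p=p'$, and then $j=j'$.

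For existence you take a more direct route. You write $b-i=jh+p$ by Euclidean division and read off $(j,p)$. The paper instead runs a downward induction from $i=b=k(0,0)$, changing $p$ by one at each step and carrying into $j$ at the wrap-around. That induction is in effect a hand-rolled proof of the division algorithm, and as written its step goes the wrong way. Decreasing $p$, or passing from $p_t=0$ to $p'=h-1$ with $j$ decremented, yields $k(j',p')=t+1$, not $t-1$. To reach $t-1$ one must increase $p$, and when $p_t=h-1$ set $p'=0$ and $j'=j_t+1$.

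Your version avoids this bookkeeping. It also checks the constraint $j\le\lfloor (b-p)/h\rfloor$ from Definition~\ref{def.split}, which the paper leaves implicit.
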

		
\begin{proof}
We first claim that
any $i\in \{0,\ldots,b\}$ can be expressed by $k(j,p)$.
The case $i=b$ is easy;
we have $b=k(0,0)$.
Assume then inductively that
$i=t$ can be expressed as $t=k(j_t,p_t)$,
and consider the case $i=t-1$.
From the assumption,
we have $t-1=b-p_t-j_th-1$.
If $p_t>0$,
then by letting $p'=p_t-1$ and $j'=j_t$,
we have $k(j',p')=t-1$.
On the other hand,
if $p_t=0$,
then by letting $p'=h-1$ and $j'=j_t-1$
(since $t\ne b$ and $j_t\ne 0$),
we have $k(j',p')=b-p'-j'h=b-h-1-(j_t-1)h=t-1$.
This proves the first claim. 

For proving the uniqueness,
assume that there are two distinct pairs
$(j_1,p_1)$ and $(j_2,p_2)$
such that $k(j_1,p_1)=i=k(j_2,p_2)$.
Then by definition of $k(j,p)$,
we have
$b-j_1h-p_1=b-j_2h-p_2$,
which implies that
\[
j_1h+p_1=j_2h+p_2
\iff
j_1-j_2=\frac{p_2-p_1}{h}
\]
Since $|p_2-p_1|\le h-1$ and $j_1-j_2$ is an integer,
$j_1-j_2=0$.
Then $p_1=p_2$ follows,
and we have $(j_1,p_1)=(j_2,p_2)$;
that is,
the uniqueness holds.
\end{proof}

\subsection{A lower bound}\label{lb}

Let us fix an assignment to $y$-variables as above.
That is,
$y_{n-a-1}$ and $y_{n-b-1}$ are assigned $1$,
and the other $y_i$'s are assigned $0$.
Furthermore,
we assume the situation
where we can construct an OBDD
by ``knowing'' this assignment to $y$-variables;
more specifically,
we define a fooling set for a function $\smul(\vecx;a,b)$ on $x$-variables.
(Note that
we are not restricting ourselves
to specific orderings where $y$-variables come first
for showing our lower bound.
We simply show a width lower bound for $\smul(\vecx;a,b)$,
and use the maximum of such lower bounds among all $a<b$
as a width lower bound for $\smul(\vecx,\vecy)$.)

Below we fix any variable ordering on $x$-variables.
Note that
a partition of $x$-variables is determined
by a variable $x_i$ at which we want to discuss the width of OBDDs.

\begin{defi}\label{def.foolingset}
For any partition $(L,R)$ of $x$-variables,
$\Acal(L,R;a,b)$ is
a set of left assignments $l$ defined as follows
with parameters $\omega_p\in\{0,1\}$ for $p\in\st(L,R;a,b)$:
\[
l'(x_i)=
\left\{
\begin{array}{ll}
0,&\mbox{if $i>b$, and}\\
(j+w_p)\mod 2,&\mbox{if $i\le b$,}
\end{array}\right.
\]
where $(j,p)=k^{-1}(i)$,
the unique pair such that $k(j,p)=i$,
and $w_p=\omega_p$ if $p\in\st(L,R;a,b)$ and $w_p=0$ otherwise.
\end{defi}

From this definition,
the following size estimate is immediate.
		
\begin{claim}\label{clm.fsize}
\[
|\mathcal{A}(L,R;a,b)|=2^{|\st(L,R;a,b)|}.
\]
\end{claim}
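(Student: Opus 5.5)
The plan is to view Definition~\ref{def.foolingset} as a map $\Phi$ from parameter vectors $(\omega_p)_{p\in\st(L,R;a,b)}\in\{0,1\}^{|\st(L,R;a,b)|}$ to left assignments. Here a left assignment $l$ is the restriction to $L$ of the full assignment $l'$ built from the $\omega_p$. By construction $\Acal(L,R;a,b)$ is the image of $\Phi$, so $|\Acal(L,R;a,b)|\le 2^{|\st(L,R;a,b)|}$ holds trivially. The whole content of the claim is that $\Phi$ is injective: two distinct parameter vectors must give left assignments that differ on some variable of $L$.

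To prove injectivity, take two vectors $\vecw\neq\vecwp$. Fix some $p\in\st(L,R;a,b)$ with $\omega_p\neq\omega'_p$. Since $p\in\st(L,R;a,b)$, Definition~\ref{def.split} supplies some $j\in\{0,\ldots,\lfloor(b-p)/h\rfloor\}$ such that exactly one of $x_{k(j,p)}$ and $x_{k(j+1,p)}$ lies in $L$. Call that index $i^*$. Because $i^*=k(j^*,p)$ with $j^*\in\{j,j+1\}$, Lemma~\ref{lem.biject} gives $k^{-1}(i^*)=(j^*,p)$.

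One point to check here is that $i^*\le b$ and $i^*\ge 0$, so that the second branch of the definition applies. The bound $i^*\le b$ is clear since $k(j,p)=b-p-jh\le b$. Nonnegativity must come from the range of $j$ in Definition~\ref{def.split}: in effect the pair $(k(j,p),k(j+1,p))$ is $(i,i-h)$ with $i\in H$, so both indices lie in $\{0,\ldots,b\}$. I would interpret the definition in that way, as the text following Definition~\ref{def.split} does. I expect this to be the main, though minor, obstacle: the edge case $j=\lfloor(b-p)/h\rfloor$ could give $k(j+1,p)<0$, and there only $x_{k(j,p)}$ can be the $L$-member. Either reading still produces some $L$-variable of the form $x_{k(j^*,p)}$ with a valid index.

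With $i^*$ in hand, the left assignments from $\vecw$ and $\vecwp$ give $x_{i^*}$ the values $(j^*+\omega_p)\bmod 2$ and $(j^*+\omega'_p)\bmod 2$. These differ because $\omega_p\neq\omega'_p$, and since $x_{i^*}\in L$ the two left assignments are distinct. Hence $\Phi$ is a bijection onto $\Acal(L,R;a,b)$, which gives $|\Acal(L,R;a,b)|=2^{|\st(L,R;a,b)|}$.
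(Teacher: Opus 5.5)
Your proof is correct and takes the same approach as the paper, which simply calls the claim immediate from Definition~\ref{def.foolingset}. Your injectivity argument supplies the detail the paper skips: each $p\in\st(L,R;a,b)$ has a witness variable $x_{k(j^*,p)}\in L$ whose value flips with $\omega_p$, which is also how the proof of Lemma~\ref{lem.fset} uses $x_{m_0}$. Your reading of the index range in Definition~\ref{def.split}, taking both indices of the split pair in $\{0,\ldots,b\}$, is the sensible one.
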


We show that
the set $\Acal(L,R;a,b)$ is indeed a fooling set for $\smul(\vecx;a,b)$.

\begin{lem}\label{lem.fset}
$\Acal(L,R;a,b)$ is a fooling set for $\smul(\vecx;a,b)$.
\end{lem}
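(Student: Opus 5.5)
The plan is to fix two distinct left assignments $l,l'\in\Acal(L,R;a,b)$, given by parameter vectors $(\omega_p)$ and $(\omega'_p)$, and to construct a right assignment $r$ from the alternating pattern of Definition~\ref{def.foolingset}. Let $D=\{p\in\st(L,R;a,b)\mid \omega_p\neq\omega'_p\}$; it is nonempty because $l\neq l'$. By Lemma~\ref{lem.biject} the indices $0,\dots,b$ split into $h$ \emph{chains} $C_p=\{k(j,p)\mid j\ge 0\}$, and the pairs $(i,i-h)$ with $i\in H$ appearing in equation (\ref{eqn.smul}) are exactly the consecutive pairs $(k(j,p),k(j+1,p))$ of a single chain. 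Note also that $b=k(0,0)$ and $a=b-h=k(1,0)$, so $x_a$ and $x_b$ are consecutive members of chain $C_0$.

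First choice of $r$: set $x_t=0$ for right variables with $t>b$, and for $t=k(j,p)\le b$ set $x_t=(j+w_p)\bmod 2$, with $w_p$ taken from $l$. Then $l\cdot r$ is the full alternating assignment. Along every chain consecutive values differ, so no $i\in H$ has $x_i=x_{i-h}$, and by (\ref{eqn.smul}) we get $\smul(l\cdot r;a,b)=x_a\oplus x_b=1$.

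Now consider $l'\cdot r$. On chains $C_q$ with $q\notin D$ nothing changes. On a chain $C_p$ with $p\in D$, all $L$-entries are flipped while all $R$-entries are not. Hence $x_i=x_{i-h}$ holds in $l'\cdot r$ exactly at the \emph{split pairs} of chains in $D$, i.e.\ consecutive pairs with one index in $L$ and one in $R$. Such a pair exists by the definition of $\st$, so $m$ is well defined, and it is the largest upper index of a split pair over chains in $D$. The value is then $x_a\oplus x_b\oplus x_m$ evaluated under $l'\cdot r$, and the goal is to show it equals $0$.

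Two quantities must be tracked:
\begin{itemize}
\item whether the pair $(b,a)$ is itself split with $0\in D$, in which case $x_a\oplus x_b$ becomes $0$ and moreover $m=b$;
\item the value of $x_m$, which is $(j+w'_p)\bmod 2$ or $(j+w_p)\bmod 2$ according to whether $x_m\in L$ or $x_m\in R$.
\end{itemize}

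The main obstacle is that this particular $r$ may give $\smul(l'\cdot r;a,b)=1$ as well, depending on the parity of $x_m$. In that case I would use the freedom that remains.
\begin{itemize}
\item The roles of $l$ and $l'$ can be swapped, taking $r$ from $l'$'s pattern. Now $l'\cdot r$ outputs $1$, and the output of $l\cdot r$ is governed by the same split pair but with $x_m$ taking the other value.
\item If that still fails, a single right variable can be modified. Take $x_{i-h}$ or $x_i$ at the maximal split pair, whichever lies in $R$, and flip it. This creates or destroys the coincidence $x_i=x_{i-h}$ simultaneously for both $l\cdot r$ and $l'\cdot r$; alternatively, it creates a coincidence one step higher in the same chain. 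The effect is that $m$ becomes the same index for both inputs, while $x_m$ or the neighbouring left value differs between them, because they differ on the $L$-part of chain $C_p$.
\end{itemize}
I would organize the case analysis as follows:
\begin{enumerate}
\item $p=0$ versus $p\neq 0$ for the chain containing the maximal split pair;
\item whether $x_m\in L$ or $x_m\in R$.
\end{enumerate}
In each case I would verify by the parity computation $(j+w_p)\bmod 2$ that one of the choices of $r$ above yields $\smul(l\cdot r;a,b)\neq\smul(l'\cdot r;a,b)$, which proves that $\Acal(L,R;a,b)$ is a fooling set.
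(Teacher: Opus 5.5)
There is a gap, though a small one that closes inside your own framework. Your setup is right: the chains $C_p$, the set $D$, the alternating right assignment, and the fact that in $l'\cdot r$ the coincidences $x_i=x_{i-h}$ sit exactly at split pairs of $D$-chains. But the proposal stops before the decisive step. You never show that some candidate $r$ actually separates $l$ and $l'$. Instead you defer to a case analysis you ``would verify'' and to a fallback of flipping one right variable. That fallback is unnecessary, and it is described inaccurately: flipping the $R$-entry of a split pair in a $D$-chain creates the coincidence for one of $l\cdot r$, $l'\cdot r$ and destroys it for the other, and it also disturbs the adjacent pair.

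There is also an outright error, which is probably what made you doubt the swap. The pair $(b,a)$ is not a carry position, because $H=\{h,\dots,b-1\}$. When $(b,a)$ is split with $0\in D$, it only turns $x_a\oplus x_b$ into $0$; it is false that $m=b$. For the same reason, your claim that $m$ is always well defined fails when $(b,a)$ is the only split pair among the $D$-chains.

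\textbf{Closing the argument.} The swap alone suffices. Let $r_l,r_{l'}$ be the alternating patterns of $l,l'$, so that $\smul(l\cdot r_l;a,b)=\smul(l'\cdot r_{l'};a,b)=1$. The two mixed inputs $l'\cdot r_l$ and $l\cdot r_{l'}$ agree in three respects.
\begin{enumerate}
\item They have the same set $S\subseteq H$ of positions with $x_i=x_{i-h}$, namely the upper indices of split pairs in $D$-chains, other than $b$.
\item They have the same value $\varepsilon$ of $x_a\oplus x_b$, with $\varepsilon=0$ iff $(b,a)$ is split and $0\in D$.
\item At $m=\max S$ they have complementary values of $x_m$. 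There $x_m=x_{m-h}$ equals the value on the $L$-entry of the pair, and $l$ and $l'$ differ on that entry.
\end{enumerate}
If $S=\emptyset$, then since every $p\in D$ has a split pair, we must have $D=\{0\}$ with $(b,a)$ split. So $\varepsilon=0$ and both mixed outputs are $0$. Otherwise the mixed outputs are $\varepsilon\oplus x_m$ and $\varepsilon\oplus\overline{x_m}$, one of which is $0$. Either way, one of $r_l,r_{l'}$ separates $l$ and $l'$.

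\textbf{Comparison with the paper.} Completed this way, your route differs from the paper's. The paper fixes the orientation by requiring $l$ to be $0$ on the $L$-entry of the top split pair $(m',m'-h)$ among the $D$-chains. When $m'=b$ it uses $r\equiv 0$ on $R$; then no position has $x_i=x_{i-h}=1$, so both outputs are just $x_a\oplus x_b$, which differ. When $m'<b$ it uses $r=r_l$. The paper's $r\equiv 0$ choice makes the case $m'=b$ independent of the lower pairs. Your uniform use of the alternating pattern avoids that special case, at the price of choosing the orientation only after locating $\max S$.
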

		
\begin{proof}
Consider any two distinct left assignments $l,l'\in \Acal(L,R;a,b)$.
What we need to show is that
there is a right assignment $r$ such that
$\smul(l\cdot r;a,b)\ne \smul(l'\cdot r;a,b)$,
where by, e.g., $\smul(l\cdot r;a,b)$,
we mean the value of $\smul(\vecx;a,b)$
under the assignment $l\cdot r$ to $\vecx$.

Let $\vecw$
denote a sequence of $\omega_p$'s for all $p\in\st(L,R;a,b)$
that are used for defining assignment $l$;
respectively,
let $\vecwp$ be a sequence $\omega'_p$'s that define $l'$.
Since $l\ne l'$,
there is at least one $p\in\st(L,R;a,b)$ such that $\omega_p\ne \omega'_p$.
Recall that
for any $p\in \st(L,R;a,b)$,
we have either $x_{k(j,p)}\in L$ and $x_{k(j+1,p)}\in R$,
or $x_{k(j,p)}\in R$ and $x_{k(j+1,p)}\in L$.
Define $m'$ by
\[
m'=\max\{i|\mbox{$\omega_p\ne \omega'_p$ for $(j,p)$ s.t.\ $k(j,p)=i$}\}
\]
Define $m_0=m'$ if $x_{m'}\in L$ and $m_0=m'-h$ otherwise
(so that $x_{m_0}\in L$ holds).
Since $l(x_{m_0})\ne l'(x_{m_0})$,
we may assume that $l(x_{m_0})=0$ and $l'(x_{m_0})=1$
without lost of generality.
Note that $m_0\le b$.
Consider two cases depending on the relation between $m_0$ and $b$,
we show the following claims,
whose proofs will be given in Appendix \ref{app.12}.

\begin{claim}\label{cla.imb}
Consider the case where $m'=b$.
Define $r(x_i)=0$ for all $x_i\in R$.
Then the following properties holds:
\begin{enumerate}
\renewcommand{\labelenumi}{(\arabic{enumi})}
\item
$l\cdot r(x_a)\oplus l\cdot r(x_b)=0$,
\item
$l'\cdot r(x_a)\oplus l'\cdot r(x_b)=1$,
\item
no $i\in H:=\{h,\ldots,b-1\}$ exists
such that $l\cdot r(x_i)=l\cdot r(x_{i-h})=1$, and
\item
no $i\in H$ exists
such that $l'\cdot r(x_i)=l'\cdot r(x_{i-h})=1$.
\end{enumerate}
\end{claim}

This means (i.e., from (3) and (4) of the claim)
that no carry to the middle bit occurs in equation~(\ref{eqn.smul})
for $\smul(l\cdot r;a,b)$ and $\smul(l'\cdot r;a,b)$;
thus,
we have
$\smul(l\cdot r;a,b)\ne \smul(l'\cdot r;a,b)$
from (1) and (2) of the claim.
  			
\begin{claim}\label{cla.imlessb}
Consider the case where $m'<b$.
Define $r$ by
\[
r(x_i)=
\left\{
\begin{array}{ll}
0,&\mbox{if $i>b$, and}\\
(j+w_p)\mod 2&\mbox{if $i\le b$,}
\end{array}\right.
\]
where $(j,p)$ is the unique pair such that $k(j,p)=i$,
and $w_p=\omega_p$ if $p\in\st(L,R;a,b)$ and $w_p=0$ if otherwise.
Then the following properties holds:
\begin{enumerate}
\renewcommand{\labelenumi}{(\arabic{enumi})}
\item
$l\cdot r(x_a)\oplus l\cdot r(x_b)=l'\cdot r(x_a)\oplus l'\cdot r(x_b)=1$,
\item
$l\cdot r(x_i)\ne l\cdot r(x_{i-h})$ for all $i\in H$,
\item
$\max\{i|l'\cdot r(x_i)=l'\cdot r(x_{i-h}),\ i\in H\}=m'$
and $l'\cdot r(x_{m'})=1$.
\end{enumerate}
\end{claim}

From (3) of the claim,
there is no carry to the middle bit occurs in $\smul(l\cdot r;a,b)$,
while the carry do occurs in $\smul(l\cdot r;a,b)$ from (4) of the claim.
Furthermore,
from (1) and equation~(\ref{eqn.smul}),
we have
$\smul(l\cdot r;a,b)=1$ and $\smul(l'\cdot r;a,b)=0$.
Therefore for any case discussed above,
there is a right assignment $r$ such that
$\smul(l\cdot r;a,b)\ne\smul(l'\cdot r;a,b)$,
proving the desired property.
\end{proof}

Consider any variable ordering $\pix$ on $x$-variables,
and for any $x$-variable $x_i$,
let $(L,R)$ be a partition w.r.t.\ $x_i$ and $\pix$.
Then from Lemma~\ref{lem.widthsize},
we have
\[
\widpix(i;\smul(\vecx;a,b))\ge|\Acal(L,R;a,b)|.
\]

Since $\smul(\vecx;a,b)$ is
a special case of the original $\smul(\vecx,\vecy)$,
this width lower bound
can be clearly used as a width lower bound of $\smul$ at $x_i$
w.r.t.\ any variable ordering for $\smul$ that is an extension of $\pix$.
Thus,
for any variable ordering $\pi$ for $\smul$
and any $x$-variable index $i$,
letting $(L,R)$ be a partition of $x$-variables w.r.t.\ $x_i$ and $\pi$,
we have
\[
\wid(i;\smul)\ge\max_{a,b}|\Acal(L,R;a,b)|
\]

Now we define
our target exponent function $\Ssmul(n)$ as follows.
\begin{equation}\label{eqn.Ssmul}
\Ssmul(n)=\min_{\pix}\max_{L,R}\max_{a,b}|\st(L,R;a,b)|,
\end{equation}
where $\pix$ is taken from all $x$-variable ordering,
$(L,R)$ is taken
from all partition of $x$-variables w.r.t.\ $\pix$,
and $a<b\in\{0,\ldots,n-1\}$.
Let $\piwopt$
denote a  $x$-variable ordering achieving the minimum in the above.
Unfortunately,
we cannot give
a combinatorial characterization to $\piwopt$ in this paper,
and this is an important question left open.

Since $|\Acal(L,R;a,b)|=2^{|\st(L,R;a,b)|}$,
from the above discussion,
we can conclude the following.

\begin{thm}\label{thm.widthlower}
For any variable ordering for $\smul$,
there exists some index $i$ of $x$-variables such that
\[
\wid(i;\smul)\ge 2^{\Ssmul(n)}.
\]
\end{thm}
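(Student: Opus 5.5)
Fix an arbitrary variable ordering $\pi$ for $\smul$ on all $2n$ variables, and let $\pix$ be the ordering it induces on the $x$-variables (delete the $y$-variables and keep the relative order of the $x$'s). By the definition (\ref{eqn.Ssmul}) of $\Ssmul(n)$ as a minimum over all $x$-orderings, we have $\max_{L,R}\max_{a,b}|\st(L,R;a,b)|\ge\Ssmul(n)$ for $\pix$. So we can pick an $x$-partition $(L,R)$ w.r.t.\ $\pix$ and a pair $a<b$ attaining this maximum, with $|\st(L,R;a,b)|\ge\Ssmul(n)$. Such an $x$-partition is determined by its first right variable, which is some $x_i$.

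We now compare with the partition $(L',R')$ of all $2n$ variables w.r.t.\ $x_i$ and $\pi$. Its $x$-part is exactly $(L,R)$, and $L'$ may additionally contain some $y$-variables. To lift the fooling set, we extend each $l\in\Acal(L,R;a,b)$ to $L'$ by giving the $y$-variables of $L'$ the fixed values $y_{n-a-1}=y_{n-b-1}=1$ and all others $0$. For any two distinct extended assignments, Lemma~\ref{lem.fset} supplies an $x$-right assignment $r$ with $\smul(l\cdot r;a,b)\ne\smul(l'\cdot r;a,b)$. We extend $r$ to $R'$ by the same fixed $y$-values. The resulting full inputs then have exactly two $y$'s equal to $1$, in the required positions, so $\smul(\vecx,\vecy)$ agrees with $\smul(\vecx;a,b)$ on them and the two values differ. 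Distinct $l$ give distinct extensions, so this is a fooling set for $\smul$ w.r.t.\ $(L',R')$ of size $2^{|\st(L,R;a,b)|}$ by Claim~\ref{clm.fsize}.

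Lemma~\ref{lem.widthsize} then gives $\wid(i;\smul)\ge 2^{|\st(L,R;a,b)|}\ge 2^{\Ssmul(n)}$.

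The only delicate point is the lifting step: the fixed $y$-values must be used consistently on both sides of the partition, whichever side each $y$-variable falls on. A minor edge case is when $R$ is empty, i.e.\ the maximizing partition lies at the end of the ordering. In that case $\st$ is empty by definition, so it cannot attain a positive maximum unless $\Ssmul(n)=0$, and then the bound is trivial.
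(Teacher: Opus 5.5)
Your proof is correct and follows the same route as the paper. You pass to the induced $x$-ordering and use the min--max definition (\ref{eqn.Ssmul}) to pick a partition and a pair $a<b$ with $|\st(L,R;a,b)|\ge\Ssmul(n)$; you then apply Lemma~\ref{lem.widthsize} to the fooling set $\Acal(L,R;a,b)$ of size $2^{|\st(L,R;a,b)|}$, after lifting it to $\smul$ by fixing $y_{n-a-1}=y_{n-b-1}=1$ and all other $y$'s to $0$ on both sides of the cut. That lifting is a careful version of the step the paper calls ``clear'', and your $R=\emptyset$ edge case cannot actually arise, since a partition w.r.t.\ $x_i$ always has $x_i\in R$.
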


\OMIT{%
                \begin{proof}
			According to \textbf{Lemma \ref{lem.obdd_vk}}, since $\Bpi(\smul)\ge 2\cdot \wid(i)-1$ for any $i$, we have $\Bpi(\smul)\ge 2\cdot \underset{i}{\max}\{\wid(i)\}-1$. And by \textbf
, $\underset{i}{\max}\wid(i)\ge s(n)$ and we can say $\Bpi(\smul)\ge 2\cdot 2^{s(n)}-1$.
		\end{proof}}

\subsection{An upper Bound}\label{ub}

Our task is to show (\ref{eqn.fupper}) for $\smul$.
Here we can fix our variable ordering to a desired one.
We consider a variable ordering such that
the first $n$ variables are $y$-variables (in any order)
and the remaining $x$-variables are ordered following $\piwopt$.
But the following argument works for any $x$-variable ordering
(assuming that all $y$-variables come first),
and we explain our analysis by using any $x$-variable ordering $\pi$.

Below we use symbols as before.
Let $i$ denote any $x$-variable index and fixed,
and let $(L,R)$ be a $x$-variable partition w.r.t.\ $x_i$ and $\pi$.
We assume for any $a<b\in\{0,\ldots,n-1\}$ fixed,
$y_{n-a-1}$ and $y_{n-b-1}$ are assigned $1$,
and the other $y$-variables are assigned $0$.
Thus,
here again
we consider the problem of computing the function $\smul(\vecx;a,b)$.
Also the set $\st(L,R;a,b)$ is defined as Definition~\ref{def.split}.

\begin{defi}\label{def.pathset}
For any partition $(L,R)$ of $x$ variables, $\Pcal(L,R;a,b)$ is the set of left assignments $l'$ defined as follows with parameters $v_a,v_b,v_0\in \{0,1\}$ and $\omega_p\in \{0,1\}$ for 
			$p\in \st(L,R;a,b)$.
		
\begin{equation*}
l'(x_i)=
				\begin{cases}
					0 & i>b\\
					v_b & i=b\\
					v_a & i=a\\
					(j+w_p)\mod 2  & \text{otherwise}
				\end{cases}
			\end{equation*}

			where $k(j,p)$ is a unique pair such that $k(j,p)=i$ and $w_p=\omega_p$ if $p\in \st(L,R;a,b)$ and $w_p=v_0$ otherwise.
		\end{defi}                     

		We can also estimate the size of $\Pcal(L,R;a,b)$ immediately.

        \begin{claim}\label{cla.psize}             
        	$|\Pcal(L,R;a,b)|\le 8\cdot 2^{|\st(L,R;a,b)|}$.         
        \end{claim}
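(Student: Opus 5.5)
The bound follows by counting parameters: $\Pcal(L,R;a,b)$ is the image of a map from the parameter space to left assignments, so its size is at most the number of parameter choices. By Definition~\ref{def.pathset}, a left assignment $l'\in\Pcal(L,R;a,b)$ is determined by the following data:
\begin{enumerate}
\item the three bits $v_a,v_b,v_0\in\{0,1\}$;
\item the family $(\omega_p)_{p\in\st(L,R;a,b)}\in\{0,1\}^{|\st(L,R;a,b)|}$.
\end{enumerate}
Everything else in the definition is fixed. For $i>b$ we have $l'(x_i)=0$. For $i=b$ and $i=a$ the values are $v_b$ and $v_a$. For every other $i\le b$, Lemma~\ref{lem.biject} gives a unique pair $(j,p)=k^{-1}(i)$, so $l'(x_i)=(j+w_p)\bmod 2$ is a well-defined function of $w_p$. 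In turn, $w_p$ equals $\omega_p$ if $p\in\st(L,R;a,b)$ and equals $v_0$ otherwise.

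Only variables in $L$ are assigned, since $l'$ is a left assignment. So if, say, $x_a\notin L$, the parameter $v_a$ has no effect. This can only make distinct parameter tuples collide; it never increases the count. The same holds for any $\omega_p$ or for $v_0$ that happens not to influence a variable of $L$.

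The key step is therefore to write $\Pcal(L,R;a,b)$ as the image of
\[
\Phi:\{0,1\}^3\times\{0,1\}^{\st(L,R;a,b)}\to\{0,1\}^L
\]
under the rule above, and to note that $\Phi$ is well defined. This needs $k^{-1}$ to be a function on $\{0,\ldots,b\}$, which is exactly Lemma~\ref{lem.biject}. Then
\[
|\Pcal(L,R;a,b)|\le|\{0,1\}^3|\cdot 2^{|\st(L,R;a,b)|}=8\cdot 2^{|\st(L,R;a,b)|}.
\]

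There is no real obstacle. The only points to check are that $\Phi$ is well defined and that the index $i$ in Definition~\ref{def.pathset} ranges over $L$. We also do not need the cases $a=b$ or $p$ coinciding with the index of $a$ or $b$ to be disjoint, because the explicit overrides for $i=a$ and $i=b$ take precedence. Unlike Claim~\ref{clm.fsize}, we only get an inequality here, because injectivity of $\Phi$ may fail, and we make no attempt to prove it.
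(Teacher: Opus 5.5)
Your proposal is correct and is essentially the paper's own (omitted) argument: every element of $\Pcal(L,R;a,b)$ is determined by the three bits $v_a,v_b,v_0$ together with the $|\st(L,R;a,b)|$ bits $\omega_p$, so there are at most $8\cdot 2^{|\st(L,R;a,b)|}$ of them. Your points of care are the right ones: invoking Lemma~\ref{lem.biject} for well-definedness, and noting that only an inequality is claimed because the parametrization need not be injective once restricted to $L$.
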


		\OMIT{\begin{proof}
			Let $(L,R)$ be a partition of input \vecx. For any tow distinct left assignments $l$ and $l'\in \Pcal(L,R;a,b)$, the difference between $l$ and $l'$ is based on the different values of $\omega_p$, $v_0$, $v_a$ and $v_b$ for same $x_i$ $(i\le b)$. Since $\omega_p$ equals 0 or 1 randomly when $p\in \st(a,b)$, the number of distinct left assignments in $\Pcal(L,R;a,b)$ is $8\cdot 2^{|\st(a,b)|}$.
		\end{proof}}

		We show that the set $\Pcal(L,R;a,b)$ is indeed a path set for $\smul(\vecx;a,b)$.

\begin{lem}\label{pathset}
$\Pcal(L,R;a,b)$ is a path set for $\smul(\vecx;a,b)$.
\end{lem}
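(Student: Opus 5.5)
The plan is to show that, for every left assignment $l$, the value $\smul(l\cdot r;a,b)$ depends on $l$ only through a few bits, and that these bits can be reproduced by suitable parameters $v_a,v_b,v_0,(\omega_p)_{p\in\st(L,R;a,b)}$. Everything goes through equation~(\ref{eqn.smul}). The variables $x_i$ with $i>b$ never occur in (\ref{eqn.smul}), which is why they may be set to $0$ in $l'$. By Lemma~\ref{lem.biject}, the variables $x_i$ with $i\le b$ split into $h$ chains $C_p=(x_{k(0,p)},x_{k(1,p)},\dots)$, $p=0,\dots,h-1$. The condition ``$x_i=x_{i-h}$'' in (\ref{eqn.smul}) only compares two consecutive members of the same chain. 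So $\smul(\vecx;a,b)$ is $x_a\oplus x_b$, possibly XORed with $x_m$, where $m$ is the largest index at which some chain has two equal consecutive entries.

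\emph{Step 1 (sorting the chains).} If $p\notin\st(L,R;a,b)$, no consecutive pair of $C_p$ is split between $L$ and $R$, so $C_p$ lies entirely in $L$ or entirely in $R$. Chains entirely in $R$ are irrelevant to the choice of $l'$. Chains entirely in $L$ contribute a quantity fixed in advance: the position $m_p$ of their topmost equal pair, if any, and the bit $x_{m_p}$. Split chains contain both $L$- and $R$-positions, and their contribution depends jointly on $l$ and $r$. I treat $x_a$ and $x_b$ separately; this is why they receive the free parameters $v_a,v_b$. Whenever $x_a$ or $x_b$ lies in $L$, I set $v_a:=l(x_a)$ or $v_b:=l(x_b)$.

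\emph{Step 2 (the fully-left chains).} Set $m_L:=\max_p m_p$ over all fully-left chains. For every $r$, the carry index $m$ either exceeds $m_L$, and is then determined by $r$ together with the split chains, or equals $m_L$, in which case the carry bit is $l(x_{m_L})$. The point to verify is that $l'$ can mimic this threshold behaviour. In $l'$ every fully-left chain is alternating with the common phase $v_0$. Alternating chains never create equal pairs, so I would try to push the effect of $l(x_{m_L})$ into $v_a$/$v_b$ and the $\omega_p$ of the split chains, with $v_0$ recording the remaining parity. Showing that the positional information $m_L$ really is absorbed in this way is the point I expect to need the most care.

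\emph{Step 3 (the split chains; main obstacle).} For $p\in\st(L,R;a,b)$ I want to choose $\omega_p$ so that, for every $r$, the topmost equal pair of $C_p$ under $l\cdot r$ and under $l'\cdot r$ sits at the same position and carries the same bit, at least whenever that pair can be the global maximum $m$. My first attempt is: let $\omega_p$ be the phase of $l$ at the highest $L$-position of $C_p$ that is adjacent to an $R$-position. Then check by cases on where the top equal pair of $C_p$ under $l\cdot r$ lies. If it lies above that position, it involves only that $L$-value and $R$-values, so it is reproduced exactly. If it lies at or below that position, I argue it is dominated by, or equivalent to, the case handled in Step 2.

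This case analysis, namely that one bit per split chain suffices to reproduce the topmost equality for all $r$, is where I expect the argument to be hardest and most fragile. I would verify it against small $h$ before writing it out. Once it holds, equation~(\ref{eqn.smul}) gives $\smul(l\cdot r;a,b)=\smul(l'\cdot r;a,b)$ for every $r$, which is exactly the path-set condition of Definition~\ref{def.pathset}.
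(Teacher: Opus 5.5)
The gap is where you expected it, in Steps 2 and 3, and it cannot be closed, because the statement is false for $\Pcal(L,R;a,b)$ as defined. The subfunction $r\mapsto\smul(l\cdot r;a,b)$ remembers where, and with which bit, $l$ alone creates an equal pair $x_i=x_{i-h}$ with both variables in $L$. Every $l'\in\Pcal(L,R;a,b)$ alternates along every chain apart from the prescribed values at $x_a,x_b$, so it has no such pair. Neither $v_a,v_b$ (already fixed by your Step 1, and absent when $x_a,x_b\in R$) nor the $\omega_p$ can simulate one.

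Concretely, let $n=6$, $a=3$, $b=5$, so $h=2$, $Y=5$, and the function is bit $5$ of $5X$. Take $L=\{x_0,x_2,x_4\}$ and $R=\{x_1,x_3,x_5\}$. Both chains are unsplit, $\st(L,R;3,5)=\emptyset$, and $\Pcal$ consists of the two assignments $(x_4,x_2,x_0)\mapsto(v_0,1-v_0,v_0)$. For $l:(x_4,x_2,x_0)\mapsto(0,1,1)$ and $r:(x_1,x_3,x_5)\mapsto(0,1,0)$ we get $X=13$, $5X=65$, middle bit $0$. The two members of $\Pcal$ give $X\in\{12,25\}$, $5X\in\{60,125\}$, middle bit $1$. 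So in Step 2 there is nothing to push $l(x_{m_L})$ into.

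The ``at or below'' case of Step 3 fails the same way inside a split chain. Take $n=7$, $a=5$, $b=6$ ($h=1$, a single chain, $\st=\{0\}$) and $L=\{x_0,\dots,x_3\}$. Your rule maps $l$ with $x_3x_2x_1x_0=1100$ to the alternating $1010$. With $r(x_4)=0$, $r(x_5)=1$, $r(x_6)=0$ we get $3X=132$ against $126$ (and $111$ for the other member of $\Pcal$), i.e.\ middle bits $0$ against $1$. The topmost equal pair $(x_3,x_2)$ sits inside $L$, and there is no fully-left chain to hand it to.

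The paper's proof does not fill this hole either. It only verifies $l\cdot r(x_m)=l'\cdot r(x_m)$ for the carry index $m$ of $l\cdot r$, and never shows that $m$ is also the carry index of $l'\cdot r$. In its Case 1 the chain through $m_L$ is alternating under $l'$, and if that chain is split, $v_0$ does not even act on it. In the first example $m_L=2$, $v_0=0$, and indeed $l'(x_2)=l(x_2)=1$, yet $l'\cdot r$ has no equal pair at all.

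A better choice of representatives within a set of this size cannot help either. Take $h=2$ and $L=\{x_{b-1},x_{b-3},\dots\}$, so $\st(L,R;a,b)=\emptyset$. Let $l_e$ have no equal pair above index $e$ and satisfy $l_e(x_e)=l_e(x_{e-2})=1$. For $e>e'$, let $r$ have no equal pair on the other chain above index $e-1$ and set $r(x_{e-1})=r(x_{e-3})=0$; this $r$ separates $l_e$ from $l_{e'}$. That gives about $b/2$ pairwise inequivalent left assignments, so no path set of size $8\cdot 2^{|\st(L,R;a,b)|}$ exists in general. A correct version has to record the position and bit of the carry forced by $l$ alone, and Claim~\ref{cla.psize} must be weakened by at least a factor of order $n$.
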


		\begin{proof}
			We show that for any left assignment $l$, there exists a left assignment $l'\in \Pcal(L,R;a,b)$ such that for any right assignment $r$, $\smul(l\cdot r;a,b)=\smul(l' \cdot r;a,b)$. Let $\Lcal=\id(L)$ (respectively, $\Rcal=\id(R)$) denote the set of the indices of variables in $L$ (respectively, $R$).

			For any left assignment $l$, we define the following sets and variables:
			$$G_p:=\{i|i\equiv p \mod h\}$$
			$$i_p:=\max\{i|i\in G_p\cap \Lcal\}$$
			$$m_{L_p}:=\max\{i|i\in G_p\cap \Lcal\cap H, \ l(x_i)=l(x_{i-h})\}$$
			$$m_L:=\max\{m_{L_p}|p\in \{0,\dots,h-1\}\}$$

			Then we choose $l'$ from $\Pcal(L,R;a,b)$ by letting: 
			\begin{equation}\label{eq.va}
				v_a= \begin{cases}
					l(x_a) & x_a \in L \\
					0 & x_a\in R
				\end{cases}
			\end{equation}

			\begin{equation}\label{eq.vb}
				v_b=\begin{cases}
					l(x_b) & x_b\in L\\
					0 & x_b\in R
				\end{cases}
			\end{equation}
			and $v_0$ satisfy that $l(m_L)=(j_L+v_0)\mod 2$ where $k(j_L,p_L)=m_L$ which implies that $l\cdot r(x_{m_L})=l'\cdot r(x_{m_L})$ And let {\boldmath $u$}$=(u_0,\dots,u_{h-1})$ satisfy that $l(i_p)=(j+u_p)\mod 2$ where $k(j,u_p)=i_p$. Let $\omega_p=u_p$ for $p\in \st(a,b)$. Now we finish choosing a $l'$ from $\Pcal(L,R;a,b)$. If the value of $l(x_i)$ (respectively, $r(x_i)$) can be expressed as $l(x_i)=(j+u_p)\mod 2$ (respectively, $r(x_i)=(j+u_p)\mod 2$) where $i=k(j,p)$, we say that $i$ has \emph{regular form} about $l(x_i)$ ($r(x_i)$, respectively) for convenience.

			Because of the equation (\ref{eq.va}) and (\ref{eq.vb}), for any right assignment $r$, $l\cdot r(x_a)=l'\cdot r(x_a)$ and $l\cdot r(x_b)=l'\cdot r(x_b)$. 

			For any right assignment $r$ and $p\in \{0,\dots,h-1\}$, define
			$$m_p:=\max\{i|i\in G_p\cap H, \ l\cdot r(x_i)=l\cdot r(x_{i-h})\}$$

			From the definitions above, we have $m=\underset{p}{\max}\ m_p$ and $m\ge m_L$. The remaining proof is to show that $l\cdot r(x_m)=l'\cdot r(x_m)$. Here we divide the proof into several cases.

			\begin{case}
				$m\in \Lcal$ and $m-h\in \Lcal$.\\
				Hence, $m=m_L$. Since $l(x_{m_L})=(j_L+v_0)\mod 2=l'(x_{m_L})$, we have $l\cdot r(x_m)=l'\cdot r(x_m)$.
			\end{case}
			
			\begin{case}
				$m\in \Rcal$.\\
				It is easy to see that for any right assignment $r$, $l\cdot r(x_m)=r(x_m)=l'\cdot r(x_m)$.

			\end{case}

			\begin{case}
				$m\in \Lcal$ and $m-h\in \Rcal$.\\
				Hence, there is a unique pair $(j,p)$ such that $m=k(j,p)$ and $p\in \st(L,R;a,b)$ ($m=m_p\in \Lcal$. It implies that $G_p\cap \Lcal \neq \emptyset$ and $G_p\cap \Rcal \neq \emptyset$. If all $i\in G_p\cap \Lcal$ with $m_{L_p}\le i\le i_p$ has regular form about $l(x_i)$. Then since $m_p\ge m_{L_p}$, $l(x_{m_p})=(j_m+u_p)\mod 2$ where $m_p=k(j_m,p)$ if $m_p$. So we have $l(x_{m_p})=l'(x_{m_p})$ and $l\cdot r(x_m)=l'\cdot r(x_m)$. Otherwise, there are some $i\in G_p\cap \Lcal$ with $m_{L_p}\le i\le i_p$ which does not have the regular form defined above. Then we can find a largest one $c_p$ from the counterexamples, that is, $c_p:=\max\{i\in G_p \cap \Lcal |l(x_{c_p})=(j_c+u_p+1)\mod 2,\ m_{L_p}\le i\le i_p\}$ where $c_p=k(j_c,p)$. If $c_p+h\in \Lcal$, then $c_p+h$ has regular form about $l(x_{c_p+h})$, that is, $l(x_{c_p+h})=(j_c-1+u_p)\mod 2=l(x_{c_p})$. Since $c_p+h>c_p\ge m_{L_p}$, it implies a conflict to the definition of $m_{L_p}$. So $c_p+h$ must be in $\Rcal$. And we show that $l\cdot r(x_{m_p})=l'\cdot r(x_{m_p})$.

				\begin{subcase}
					$r(x_{c_p+h})=l(x_{c_p})$\\
					Then $m_p> c_p+h>c_p\ge m_{L_p}$. Thus, $m_p$ is in $\Lcal$ having the regular form about $l(x_{m_p})$. It implies that $l\cdot r(x_{m_p})=l'\cdot r(x_{m_p})$.
				\end{subcase}

				\begin{subcase}
					$r(x_{c_p+h})\neq l(x_{c_p})$\\
					Then $r(x_{c_p+h})=(j_c+u_p)\mod 2$. Assume that there are $\Delta-1$ bits are in $\Rcal$ continuously before and next to $c_p+h$, that is, $c_p+h,c_p+2h,\dots,c_p+\Delta h\in \Rcal$. If all these bits does not have the regular form about $r$ (which means $r(x_{c_p+\delta h})=(j_c-1+\delta+u_p)\mod 2$ for $1\le \delta\le \Delta$), then $r(x_{c_p+\Delta h})=(j_c-1+\Delta +u_p)\mod 2=(j_c+1+\Delta+u_p)\mod 2=l(x_{c_p+(\Delta+1)h})$, since $c_p+(\Delta+1)h$ has the regular form about $l$. So $m_p\ge c_p+(\Delta+1)h>c_p\ge m_{L_p}$. It implies that $m_p$ is in $\Lcal$ having the regular form about $l$. Otherwise, there exists a largest $c_p+\delta_0 h$ between $c_p+h$ and $c_p+\Delta h$ such that $r(x_{c_p+\delta_0 h})\neq (j_c-1+\delta_0+u_p)\mod 2$. So $r(x_{c_p+\delta_0h})=(j_c+\delta_0+u_p)\mod 2=r(x_{c_p+(\delta_0+1)h})$. Hence $m_p> c_p+\delta_0h>c_p\ge m_{L_p}$. It implies that $m_p$ is in $\Lcal$ having the regular form about $l$. Thus, for the second case, $l\cdot r(x_{m_p})=l'\cdot r(x_{m_p})$.
				\end{subcase}
			\end{case}

			Finally we finished the proof of $l\cdot r(x_m)=l'\cdot r(x_m)$. And conclude the results for $x_a$ and $x_b$, we proved that for any left assignment $l$, there is $l'\in \Pcal(L,R;a,b)$ such that $\smul(l\cdot r;a,b)=\smul(l'\cdot r;a,b)$ for any right assignment $r$. Therefore $\Pcal(L,R;a,b)$ is a path set of $\smul$.
		\end{proof}

		Notice that for the index of $y$-variables, since there are only two bits of \vecy \mbox{} are $1$, the width of $y_i$ is smaller than $\binom{n}{k}+1$ which is much smaller than $2^{\Ssmul(n)}$ when $n$ is large enough. So we can conclude the following.

		\begin{thm}\label{thm.obddu}
			There is a variable ordering $\pi=\piwopt$ such that for any index $i$ of variables 
			$$\wid(i;\smul)\le 8n\cdot 2^{\Ssmul(n)}$$
			where $\Ssmul(n)$ is the same with \textbf{Theorem \ref{thm.widthlower}}.
		\end{thm}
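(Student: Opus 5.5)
The ordering is the one already fixed in the text: all $y$-variables come first, in any order, and then the $x$-variables follow $\piwopt$, the $x$-ordering attaining the minimum in (\ref{eqn.Ssmul}). With this ordering I will bound $\wid(i;\smul)$ separately for $y$-levels and for $x$-levels, in each case through the right inequality of Lemma~\ref{lem.widthsize}, i.e.\ by exhibiting a path set for the partition at that level.

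\textbf{$y$-levels.} Take a partition $(L,R)$ with $L\subseteq\{y_0,\dots,y_{n-1}\}$. Two left assignments to $L$ give the same subfunction on $R$ whenever they have the same number of ones and, when that number is at most $2$, the same positions of those ones. If a left assignment has three or more ones, $\smul$ is identically $0$, so all such assignments collapse to one class. The number of classes is therefore at most $1+|L|+\binom{|L|}{2}+1=O(n^2)$, and the width at a $y$-level is $O(n^2)$. To place this under $8n\cdot 2^{\Ssmul(n)}$ I need $2^{\Ssmul(n)}\gtrsim n$ for large $n$, or else I should simply state the bound as $\max\{O(n^2),\,8n2^{\Ssmul(n)}\}$. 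I would argue $\Ssmul(n)=\Omega(\log n)$ from the fact that $\st$ is nonempty for suitable $a,b$; this is where the paper's phrase ``$n$ large enough'' is needed.

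\textbf{$x$-levels.} Now the partition has the form $(L_y\cup L,R)$, where $L_y$ is all $y$-variables and $(L,R)$ is the partition of the $x$-variables under $\piwopt$. I build a path set as a union over the choices of the $y$-part:
\[
\Pcal:=\{\sigma_0\}\cup\bigcup_{a<b}\{\sigma_{a,b}\cdot l' : l'\in\Pcal(L,R;a,b)\}.
\]
Here $\sigma_{a,b}$ is the $y$-assignment with ones exactly at $y_{n-a-1}$ and $y_{n-b-1}$, and $\sigma_0$ is any fixed $y$-assignment that does not have exactly two ones. To check that $\Pcal$ is a path set, take an arbitrary left assignment $\tau\cdot l$. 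If $\tau$ does not have exactly two ones, the function is identically $0$ and $\tau\cdot l$ is equivalent to $\sigma_0$ with any $x$-part. If $\tau=\sigma_{a,b}$, Lemma~\ref{pathset} gives $l'\in\Pcal(L,R;a,b)$ with $\smul(l\cdot r;a,b)=\smul(l'\cdot r;a,b)$ for every $r$, so $\tau\cdot l$ is equivalent to $\sigma_{a,b}\cdot l'$.

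Claim~\ref{cla.psize} together with the definition of $\Ssmul$ then gives
\[
|\Pcal|\le 1+\sum_{a<b}8\cdot 2^{|\st(L,R;a,b)|}\le 1+\binom n2\cdot 8\cdot 2^{\Ssmul(n)}.
\]
This is $O(n^2)2^{\Ssmul(n)}$, not the stated $8n\cdot2^{\Ssmul(n)}$. This is the main obstacle.

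\textbf{Removing the extra factor of $n$.} To recover the factor $n$ I would exploit the fact that the subfunction depends on $(a,b)$ only through which $x$-values it reads. First, assignments from different pairs can coincide as subfunctions: the bits at positions $i>b$ are irrelevant. Second, if $x_a,x_b\in L$ the subfunction depends only on the shift $h=b-a$ together with the regular-form data. I would therefore try to show that, for a fixed level, only $O(n)$ pairs $(a,b)$ produce genuinely distinct subfunction families. For example, group the pairs by $h$, and within each group identify pairs whose relevant windows $\{0,\dots,b\}$ restricted to $L$ agree. Failing that, I would state the theorem with $8n^2$ in place of $8n$, which still gives the matching exponent $\Ssmul(n)$ through Corollary~\ref{cor.obddsize} up to a polynomial factor.
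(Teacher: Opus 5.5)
The gap is the step from your $O(n^2)\,2^{\Ssmul(n)}$ bound at $x$-levels down to the stated $8n\cdot 2^{\Ssmul(n)}$, and the paper does not close it either. Your setup otherwise matches the paper's: $y$-variables first, then $\piwopt$ on the $x$-variables. You use the per-pair path sets $\Pcal(L,R;a,b)$ of Lemma~\ref{pathset} with the size bound of Claim~\ref{cla.psize}, and an $O(n^2)$ count at $y$-levels. Your union $\{\sigma_0\cdot l_0\}\cup\bigcup_{a<b}\sigma_{a,b}\cdot\Pcal(L,R;a,b)$ is the correct way to pass from $\smul(\vecx;a,b)$ to $\smul$ itself.

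As you note, the union only gives $\wid(i;\smul)\le 1+8\binom{n}{2}2^{\Ssmul(n)}$ at $x$-levels, so the stated bound is not proved. Your sketched repair does not work as described:
\begin{itemize}
\item Irrelevance of the $x_i$ with $i>b$ only identifies left assignments \emph{within} one pair, and $\Pcal(L,R;a,b)$ already exploits that.
\item Grouping by $h$ does not help by itself. Within a group the subfunction still depends on $b$ in general, through the carry window $\{h,\dots,b-1\}$.
\item The claim that only $O(n)$ pairs give genuinely distinct subfunctions is false. At the first $x$-level ($L=\emptyset$) the $\binom n2$ functions $\smul(\vecx;a,b)$ are pairwise distinct: the input with only $x_c=1$ evaluates to $1$ iff $c\in\{a,b\}$.
\end{itemize}
What you would need is a genuinely new cross-pair estimate. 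One example is $\sum_{a<b}2^{|\st(L,R;a,b)|}=O(n)\,2^{\Ssmul(n)}$ at every level of $\piwopt$, meaning only $O(n)$ pairs can be near the maximum of $|\st|$ at once. Another is an explicit identification of subfunctions across pairs. You provide neither.

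The paper's argument for this theorem consists of exactly Lemma~\ref{pathset}, Claim~\ref{cla.psize} and the remark on $y$-levels, and then it asserts the bound with factor $8n$. It never carries out the summation over the $\binom n2$ admissible $y$-assignments at an $x$-level. So what these ingredients actually establish is your fallback: $\wid(i;\smul)=O(n^2)\,2^{\Ssmul(n)}$ for all $i$, hence $|\Bcal(\smul)|=O(n^3)\,2^{\Ssmul(n)}$. That is still a matching exponent in the sense of the introduction. It is not the theorem as stated, nor the $\Theta(2^{\Ssmul(n)})$ width claim of the abstract.

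One smaller point concerns the $y$-levels. Nonemptiness of $\st$ only gives $\Ssmul(n)\ge 1$, not $\Omega(\log n)$. Use Lemma~\ref{lem.snbound}, $\Ssmul(n)\ge n/8$, to absorb the $O(n^2)$ term for large $n$.
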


%% file: 4_analysis.tex
	\section{Analysis of $\Ssmul(n)$}
		In \cite{Bryant_1991}, Bryant defined a function $\st'(h)$ to do the similar job as $\st(L,R;a,b)$ and showed the following bound of size of the fooling set defined by $\st'(h)$. The definition of $\st'(h)$ and the proof of the following lemma can be found in \cite{Bryant_1991}.

		\begin{lem}[Lemma.3 in \cite{Bryant_1991}]\label{lem.st'}
			For any variable ordering $\pi$ and any partition $(L,R)$ with $|L|=n/2$, $\underset{h}{\max}|\st'(h)|\ge n/8$.
		\end{lem}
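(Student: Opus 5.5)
The statement to prove is Lemma~\ref{lem.st'}, quoted from \cite{Bryant_1991}: for any ordering and any balanced partition $(L,R)$ with $|L|=n/2$, some shift $h$ has $|\st'(h)|\ge n/8$. Here $\st'(h)$ is Bryant's analogue of $\st(L,R;a,b)$, namely the set of residue pairs along which the shift by $h$ crosses the cut. The plan is a counting (averaging) argument over all shifts $h$, followed by a pigeonhole step.

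\textbf{Step 1: focus on the pairs that cross the cut.} Since $|L|=|R|=n/2$, there are exactly $n^2/4$ ordered pairs $(u,v)$ with $x_u\in L$ and $x_v\in R$. Each such pair has a signed difference $h=u-v$ with $1\le|h|\le n-1$. A pair with difference $h$ is exactly a cross edge of the ``shift by $h$'' structure, so it certifies that the residue class of $u$ modulo $h$ is split between $L$ and $R$ (compare Definition~\ref{def.split}).

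\textbf{Step 2: restrict to one half of the index range.} Working with all $n^2/4$ pairs is problematic, because the map from pairs to split classes for a fixed $h$ is not injective: one residue class can contain many cross edges. Bryant avoids this by counting only those $u$ in a window, for example the upper half of the indices, paired with $v$ in the lower half. Within such a window, each residue class mod $h$ with $h\ge n/2$ contains at most one index on each side. Hence, for a fixed $h$, distinct cross pairs give distinct elements of $\st'(h)$. The quantity to estimate is therefore the number $N$ of cross pairs whose two indices lie in opposite halves $\{0,\dots,n/2-1\}$ and $\{n/2,\dots,n-1\}$.

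\textbf{Step 3: bound $N$ by case analysis on the upper half.} Suppose the upper half contains $t$ variables of $L$, so it contains $n/2-t$ of $R$. The lower half then contains $n/2-t$ of $L$ and $t$ of $R$. The number of cross pairs with one index in each half is
\[
t^2+(n/2-t)^2\ \ge\ n^2/8.
\]
These pairs have differences in the range $1,\dots,n-1$. With the window restriction of Step 2, each $h$ receives at most $|\st'(h)|$ of them.

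\textbf{Step 4: pigeonhole.} Summing over at most $n$ values of $h$ gives $\sum_h|\st'(h)|\ge n^2/8$. Hence some $h$ satisfies $|\st'(h)|\ge n/8$.

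\textbf{Main obstacle: the injectivity in Step 2.} The difficulty is the loss when an $h$ is small, since then a residue class contains many indices. The fix I would try first is to consider only the ``direct'' pairs $u=v+h$ with $u$ in the top half and $v$ in the bottom half. Within a fixed residue class, each such pair still determines its class, but several such pairs may share a class when $h<n/2$. To handle this, I would define $\st'(h)$ as Bryant does, counting pairs rather than classes. That is, $\st'(h)$ is the set of $v$ in the lower half with $x_v$ and $x_{v+h}$ on opposite sides. With this definition the counting in Steps 3 and 4 is exact, and it yields the bound $n/8$ directly.
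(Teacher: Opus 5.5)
Your final argument is correct and is the standard counting proof from \cite{Bryant_1991}, which the paper cites rather than reproduces. Take $\st'(h)$ to be the set of $(L,R)$-split pairs $(v,v+h)$ with $v<n/2\le v+h$. These pairs, over $h=1,\dots,n-1$, partition all $(n/2)^2$ cross-half pairs. At least $t^2+(n/2-t)^2\ge n^2/8$ of them are split by $(L,R)$, and pigeonholing over the $n-1$ differences gives $\max_h|\st'(h)|\ge n/8$. The paper's indexing, visible in the proof of Lemma~\ref{lem.samest}, uses $h$ for pairs of difference $n/2-h$ rather than $h$; this reparametrization does not affect the maximum.

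The residue-class detour in Step~2 and your closing worry that several pairs may share a class when $h<n/2$ are unnecessary, and in fact unfounded. For fixed $h$, each residue class modulo $h$ contains at most one consecutive pair $(v,v+h)$ straddling $n/2$, so even a class-based count would have been injective.
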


		We say $\st'(h)$ is similar with $\st(L,R;a,b)$ because we can use $\st'(h)$ to bound $\st(L,R;a,b)$ by following lemma.

		\begin{lem}\label{lem.samest}
			For any variable ordering $\pi$ and any partition $(L,R)$ 
			$$\underset{a,b}{\max}|\st(L,R;a,b)|\ge \underset{h}{\max}|\st'(h)|$$
		\end{lem}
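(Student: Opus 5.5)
The plan is to pass from Bryant's quantity to ours by fixing a good choice of $a,b$. Fix $\pi$ and $(L,R)$, and let $h^*$ be a shift that maximizes $|\st'(h)|$. I will choose $a$ and $b$ with $b-a=h^*$ and then build an injection from $\st'(h^*)$ into $\st(L,R;a,b)$. This gives
\[
\max_{a,b}|\st(L,R;a,b)|\ \ge\ |\st(L,R;a,b)|\ \ge\ |\st'(h^*)| = \max_h|\st'(h)|.
\]
The natural choice is $b=n-1$ and $a=n-1-h^*$. Then $a\ge 0$ because $h^*\le n-1$. With $b$ as large as possible, every pair of $x$-indices $(i-h^*,i)$ with $h^*\le i\le b$ lies in the range covered by Definition~\ref{def.split}.

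The first step is to recall Bryant's definition of $\st'(h)$ from \cite{Bryant_1991}. As I recall it, $\st'(h)$ is a collection of pairs of $x$-variables $(x_{i-h},x_i)$ at distance $h$ such that exactly one variable of each pair lies in $L$. The pairs are moreover chosen to be ``independent'' in Bryant's fooling-set argument: no two chosen pairs interact through the carry chain. Concretely, I expect the chosen pairs to lie in distinct residue classes modulo $h$ (or in a disjoint block structure that implies this).

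The second step is the map. To a pair $(x_{i-h^*},x_i)\in\st'(h^*)$, use Lemma~\ref{lem.biject} to assign the unique pair $(j,p)=k^{-1}(i)$, so that $i=k(j,p)$ and $i-h^*=k(j+1,p)$. Since $i\le b$, the index $j$ satisfies $j\le\lfloor (b-p)/h^*\rfloor$. Because exactly one of $x_{k(j,p)},x_{k(j+1,p)}$ is in $L$, we get $p\in\st(L,R;a,b)$. The image $p$ is just the residue of $b-i$ modulo $h^*$. Hence if distinct pairs of $\st'(h^*)$ have distinct residues, the map is injective.

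I expect the main obstacle to be this injectivity, that is, checking that Bryant's $\st'(h)$ never contains two split pairs in the same residue class modulo $h$. Our $\st$ counts each residue class at most once, while a raw count of split pairs could count a chain more than once. If Bryant's definition does permit such repetitions, my fallback is to thin $\st'(h^*)$ to one pair per residue class. I would then argue, by inspecting Bryant's proof of Lemma~\ref{lem.st'}, that his fooling set only uses such a thinned family, because his independent control of carries already needs distinct chains. In that case the inequality should be stated against the thinned quantity, which is what Lemma~\ref{lem.st'} actually bounds. Either way the conclusion combines with Lemma~\ref{lem.st'} to give $\max_{a,b}|\st(L,R;a,b)|\ge n/8$ whenever $|L|=n/2$.
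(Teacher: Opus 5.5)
Your route is the paper's: fix $(a,b)$ so that $b-a$ equals the common distance $d$ of the pairs in Bryant's set, and send each split pair to the index $p$ of the chain $k(0,p),k(1,p),\dots$ that contains it. Relabelling Bryant's $h$ by the distance is harmless, since the maximum runs over all $h$. The gap is the step you yourself call the main obstacle. Injectivity rests on ``as I recall'' and ``I expect'', and the fallback does not prove the lemma as stated. Thinning $\st'(h^*)$ to one pair per residue class gives an inequality against a smaller quantity. That no longer combines with Lemma~\ref{lem.st'}, which bounds $|\st'(h)|$ itself, so you would have to redo Bryant's count. Also, $\args(h)$ is not chosen for carry-independence. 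It is simply the set of \emph{all} pairs at one fixed distance with one index at least $n/2$ and the other below $n/2$, and distinct residues follow from that index range.

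The missing check is short, and it is exactly what the paper's explicit formula $p=n/2-h-i$ encodes.
For $0\le h<n/2$, $\args(h)=\{\langle x_{i+n/2},x_{i+h}\rangle : 0\le i<n/2-h\}$. All pairs have distance $d=n/2-h$, and the upper indices $n/2,\dots,n-h-1$ are exactly $d$ consecutive integers.
For $-n/2<h<0$, $\args(h)=\{\langle x_{i+n/2-h},x_i\rangle : 0\le i<n/2+h\}$. Here $d=n/2-h$, and the upper indices $n/2-h,\dots,n-1$ are $n/2+h<d$ consecutive integers.
In both cases the upper indices $u$ are pairwise incongruent modulo $d$. So with your choice $b=n-1$, $a=n-1-d$, the map $\langle x_u,x_{u-d}\rangle\mapsto (b-u)\bmod d$ is injective from $\st'(h)$ into $\st(L,R;a,b)$. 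This map is just the $p$-component of $k^{-1}(u)$, and no thinning is needed.

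With this inserted your proof is complete. Your uniform choice $b=n-1$ is actually tidier than the paper's, which has three defects:
the paper's $b=n-h$ (and $b=n$ for $h<0$) is not a valid index when $h\le 0$;
its formula gives $p=n/2-h\notin\{0,\dots,b-a-1\}$ at $i=0$, where that pair is really $(j,p)=(1,0)$;
and for $h<0$ it states $b-a=n/2$ instead of $n/2-h$.
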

		\begin{proof}
		When $h\ge 0$, let $a=n/2$, $b=n-h$, $h'=b-a=n/2-h$. For $i\in \{0,\dots,n/2-h-1\}$, let $p=n/2-h-i$. Thus, for $\langle x_{i+n/2},x_{i+h} \rangle\in \args(h)$, $i+n/2=(n-h)-(n/2-h-i)=b-p$, $i+h=(n-h)-(n/2-h-i)-(n/2-h)=b-p-h'$. From the definition of $k(j,p)$, we have $k(0,p)=i+n/2$ and $k(1,p)=i+h$. Therefore when $\langle x_{i+n/2},x_{i+h} \rangle\in \st'(h)$, $x_{k(0,p)}=x_{i+n/2}\in L$ and $x_{k(1,p)}=x_{i+h}\in R$, or $x_{k(0,p)}=x_{i+n/2}\in R$ and $x_{k(1,p)}=x_{i+h}\in L$. So $p\in \st(L,R;a,b)$ and we have $|\st(L,R;a,b)|\ge |\st'(h)|$.

		When $h<0$, let $a=n/2+h$, $b=n$, $h'=b-a=n/2$. Then we can show that for $\langle x_{i+n/2-h}, x_i\rangle \in \st'(h)$, we have $x_{k(0,p)}=x_{i+n/2-h}\in L$ and $x_{k(1,p)}=x_{i}\in R$ or $x_{k(0,p)}=x_{i+n/2-h}\in R$ and $x_{k(1,p)}=x_{i}\in L$ for a similar reason. Then $p\in \st(L,R;a,b)$ and we have $|\st(L,R;a,b)|\ge |\st'(h)|$.

		Thus, for any $-n/2<h<n/2$, there is a pair of $(a,b)$ such that $|\st(L,R;a,b)|\ge |\st'(h)|$. So the lemma is correct.
	\end{proof}
		\OMIT{\begin{lem}\label{b-a}
			For an arbitrary partition $(L,R)$, $b>b'$ and $a>a'$ with $b-a=b'-a'$,

			$$|\st(a,b)|= |\st(a',b')|$$
		\end{lem}}
		Then we have the following result.
		\begin{lem}\label{lem.snbound}
			$$\frac{n}{8}\le \Ssmul(n)\le \frac{n}{2}$$
		\end{lem}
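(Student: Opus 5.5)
The two inequalities are proved separately, each from the definition (\ref{eqn.Ssmul}) $\Ssmul(n)=\min_{\pix}\max_{L,R}\max_{a,b}|\st(L,R;a,b)|$.

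\emph{Lower bound.} Fix an arbitrary $x$-variable ordering $\pix$. Among the partitions $(L,R)$ determined by $\pix$ there is one with $|L|=n/2$: take $x_i$ to be the $(n/2+1)$st variable in the order, assuming $n$ even or rounding suitably. For this partition, Lemma~\ref{lem.samest} gives
\[
\max_{a,b}|\st(L,R;a,b)|\ge\max_h|\st'(h)|,
\]
and Lemma~\ref{lem.st'} gives $\max_h|\st'(h)|\ge n/8$. Hence $\max_{L,R}\max_{a,b}|\st(L,R;a,b)|\ge n/8$ for every $\pix$. Minimizing over $\pix$ yields $\Ssmul(n)\ge n/8$. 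This half is only a matter of chaining the two cited lemmas. The one point to check is that Lemma~\ref{lem.st'} is stated for partitions arising from an ordering, with $\pi$ restricted to $x$-variables, which matches our setting.

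\emph{Upper bound.} Here it suffices to exhibit one ordering $\pix$ for which every partition and every pair $a<b$ satisfy $|\st(L,R;a,b)|\le n/2$. The first thing to try is the trivial bound $|\st(L,R;a,b)|\le h=b-a\le n-1$, but this is too weak. So I would choose a concrete ordering.

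\textbf{Choice of ordering.} I would take the interleaved ``outside-in'' order $x_0,x_{n-1},x_1,x_{n-2},\dots$. For any prefix, $L$ then has the form $\{0,\dots,s-1\}\cup\{n-t,\dots,n-1\}$ with $|s-t|\le1$.

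\textbf{Counting splits.} Each $p\in\st$ corresponds to a residue class of indices in $\{0,\dots,b\}$ congruent to $b-p$ modulo $h$, and it requires an $L/R$ boundary between consecutive members of that chain. With $L$ consisting of two intervals, $R$ is a single middle interval. Each chain crosses into $R$ once and leaves it at most once, so each boundary of $R$ is crossed by at most $\min(h,\cdot)$ chains.

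\textbf{Bounding the count.} A chain crosses the lower boundary at index $s$ only if it has an element in $[s-h,s)\cap[0,b]$. It crosses the upper boundary only if it has an element in $[n-t-h,n-t)\cap[0,b]$. The number of distinct residues $p$ involved is therefore at most $\min(h,\,s+\max(0,b-(n-t)+1)+\dots)$.

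\emph{Main obstacle.} I expect the hard step to be making this count genuinely at most $n/2$ in all regimes. If it fails, I would instead try the simple order $x_0,x_1,\dots,x_{n-1}$. Then $L=\{0,\dots,s-1\}$, and every chain crosses exactly one boundary, so the residues $p$ with a crossing number at most $\min(h,\ s,\ b-s+1)$. Since $h=b-a\le b$, one has $\min(s,\,b-s+1)\le (b+1)/2\le n/2$, which gives $|\st|\le n/2$ directly. This simpler route likely suffices, so I would present it as the main argument. The remaining care is the off-by-one check $(b+1)/2\le n/2$, which holds because $b\le n-1$.
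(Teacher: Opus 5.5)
Your proposal is correct and essentially matches the paper. The lower bound is the paper's own chaining of Lemma~\ref{lem.st'} and Lemma~\ref{lem.samest} at a balanced partition. Your upper bound rests on the same count the paper uses: distinct $p\in\st(L,R;a,b)$ index disjoint residue chains, and each such chain needs one variable in $L$ and one in $R$.

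The only difference is that you fix the natural ordering and obtain $\min(s,b-s+1)\le (b+1)/2$. The paper instead applies the count to an arbitrary partition and gets $|\st(L,R;a,b)|\le\min(|L|,|R|)\le n/2$ for every ordering. So neither the choice of ordering nor the outside-in detour is needed, since your count never uses that $L$ is a prefix.
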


		\begin{proof}
			Since \textbf{Lemma \ref{lem.st'}} and \textbf{Lemma \ref{lem.samest}}, $\underset{L,R}{\max}\ \underset{a,b}{\max}|\st(L,R;a,b)|\ge n/8$ for any variable ordering. Thus, the left inequality holds.

			On the other hand, from the definition of $\st(L,R;a,b)$, $|\st(L,R;a,b)|\le h$. If $h>n/2$, $|L|<n/2$ or $|R|<n/2$. Suppose that $|L|< n/2$ without lost of generality. Even if the parameters $p$ of all variables in $L$ are in $\st(L,R;a,b)$, the number is at most $|L|$ which is smaller than $n/2$. Thus, $h\le n/2$ and $|\st(L,R;a,b)|\le n/2$. 
		\end{proof}

%% file: 5_conclusion.tex
\section{Conclusion}\label{sec:con}

In this paper, we study the Boolean function $\smul(\vecx,\vecy)$, which computes the middle bit of the multiplication of two natural numbers represented as $n$-bit binary strings $\vecx$ and $\vecy$ drawn from a restricted domain. We introduce a combinatorial function $\Sf(n)$ and demonstrate that both the upper and lower bounds on the OBDDs width for computing $\smul$ can be characterized by $\Sf(n)$. This result provides an early example demonstrating that the upper and lower OBDD bounds for integer multiplication can be unified.

However, several limitations remain in our work. First, we established only asymptotic bounds on $\Sf(n)$ rather than an exact closed-form formulation, leaving its precise determination for future research. Furthermore, our unification is established only for a restricted class of $\vecx$ and $\vecy$; extending these results to arbitrary $n$-bit binary strings presents a significantly more challenging problem.

%% file: appendix.tex
	\section{Proof of Claim \ref{cla.imb} and \ref{cla.imlessb}}\label{app.12}
In this appendix, we present the complete proofs of Claims \ref{cla.imb} and \ref{cla.imlessb}.
		\begin{claim}\label{cla.imb}
Consider the case where $m'=b$.
Define $r(x_i)=0$ for all $x_i\in R$.
Then the following properties holds:
\begin{enumerate}
\renewcommand{\labelenumi}{(\arabic{enumi})}
\item
$l\cdot r(x_a)\oplus l\cdot r(x_b)=0$,
\item
$l'\cdot r(x_a)\oplus l'\cdot r(x_b)=1$,
\item
no $i\in H:=\{h,\ldots,b-1\}$ exists
such that $l\cdot r(x_i)=l\cdot r(x_{i-h})=1$, and
\item
no $i\in H$ exists
such that $l'\cdot r(x_i)=l'\cdot r(x_{i-h})=1$.
\end{enumerate}
\end{claim}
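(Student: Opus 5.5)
The plan is a direct verification that uses only the shape of the assignments in Definition~\ref{def.foolingset}. Recall that $a=b-h$. By Lemma~\ref{lem.biject}, $b=k(0,0)$ and $a=k(1,0)$, so $x_a$ and $x_b$ are the two consecutive members, with $j=0$ and $j=1$, of the residue chain $p=0$. The hypothesis $m'=b$ says that $\omega_0\ne\omega'_0$, so in particular $0\in\st(L,R;a,b)$.

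Because $m_0$ is defined as $m'$ or $m'-h$ so that $x_{m_0}\in L$, I would work in the situation the construction presupposes: exactly one of $x_b,x_a$ lies in $L$ (this is $x_{m_0}$) and the other lies in $R$. I expect the main obstacle to be justifying this. Membership of $0$ in $\st(L,R;a,b)$ only guarantees a split at some $j$ along the chain $p=0$, not necessarily between $j=0$ and $j=1$. I would first try to argue it from the way $m'$ and $m_0$ are set up. Failing that, I would make it an explicit standing assumption of this case, so that $x_{m_0}\in L$ and $x_{m_0\pm h}\in R$ for the pair $\{a,b\}$.

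With that in place, items (1) and (2) follow at once. Since $r$ assigns $0$ to every variable in $R$, we get $l\cdot r(x_a)\oplus l\cdot r(x_b)=l(x_{m_0})=0$ and $l'\cdot r(x_a)\oplus l'\cdot r(x_b)=l'(x_{m_0})=1$, using the normalization $l(x_{m_0})=0$ and $l'(x_{m_0})=1$ fixed just before the claim.

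For (3) and (4), take any $i\in H$, so that $h\le i\le b-1$ and $0\le i-h<i\le b$. Write $(j,p)=k^{-1}(i)$. Then $k(j+1,p)=i-h$, so both indices fall under the second line of Definition~\ref{def.foolingset}, with the same $p$ and hence the same $w_p$. There are two cases:

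\begin{itemize}
\item If both $x_i$ and $x_{i-h}$ lie in $L$, their values are $(j+w_p)\bmod 2$ and $(j+1+w_p)\bmod 2$. These differ, so the two values cannot both be $1$.
\item If at least one of them lies in $R$, that one is $0$ under $r$.
\end{itemize}

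Thus no $i\in H$ has both values equal to $1$. The same argument applies verbatim to $l'$ with $w'_p$ in place of $w_p$. This shows that no carry reaches the middle bit in equation~(\ref{eqn.smul}) for either input, which is exactly what (3) and (4) assert.
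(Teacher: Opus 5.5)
Your verification of (1)--(4) is the same as the paper's. The step you flag, that exactly one of $x_a,x_b$ lies in $L$, is the one the paper simply asserts (``since $m'=b$, $x_a$ and $x_b$ must be in the different partition''). You are right that it does not follow if $m'=b$ only means $\omega_0\neq\omega'_0$, but neither of your fallbacks closes the gap.

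Under that literal reading of $m'$ the claim is false, so there is nothing to derive. Take $b=2$, $a=1$ (so $h=1$ and $H=\{1\}$), with $L=\{x_2,x_1\}$ and $R=\{x_0\}$. Then $0\in\st(L,R;a,b)$ via the pair $x_{k(1,0)}=x_1$, $x_{k(2,0)}=x_0$. With $\omega_0=0$ and $\omega'_0=1$ you get $l(x_2)=0$, $l(x_1)=1$, $l'(x_2)=1$, $l'(x_1)=0$. The literal $m'$ and $m_0$ both equal $b$, and the normalization $l(x_{m_0})=0$, $l'(x_{m_0})=1$ holds. Yet $l\cdot r(x_a)\oplus l\cdot r(x_b)=1$, contradicting (1). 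In fact $\smul(l\cdot r;a,b)=\smul(l'\cdot r;a,b)=1$, so $r\equiv 0$ does not separate $l$ and $l'$ at all.

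Adding the split of $x_a,x_b$ as an extra hypothesis of this case does not help either. The dichotomy $m'=b$ versus $m'<b$ in the proof of Lemma~\ref{lem.fset} would then leave exactly this configuration uncovered.

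The fix is definitional. Let $m'$ be the largest index $k(j,p)$ with $\omega_p\neq\omega'_p$ such that $x_{k(j,p)}$ and $x_{k(j+1,p)}$ lie on opposite sides of $(L,R)$. This is what the paper intends: the sentence recalling the split pairs just before $m'$ is defined, the rule $m_0=m'-h$ ``so that $x_{m_0}\in L$'', and item (3) of Claim~\ref{cla.imlessb} all presuppose it. With this definition $m'$ is well defined, since $l\neq l'$ gives some $p\in\st(L,R;a,b)$ with $\omega_p\neq\omega'_p$, and every such $p$ has a split pair. Then $m'=b=k(0,0)$ says directly that $x_b$ and $x_a=x_{k(1,0)}$ are separated, and the rest of your argument goes through verbatim. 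The example above now has $m'=1<b$ and is handled by Claim~\ref{cla.imlessb}.
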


  		\begin{proof}
  			Since $m'=b$, $x_a$ and $x_b$ must be in the different partition. If $x_b=x_{m'}\in L$, then $m_0=b$ and from the assumption we know that $l(x_b)=0$ and $l'(x_b)=1$ while $x_a\in R$ and $r(x_a)=0$. Otherwise $x_b=x_{m'}\in R$, then $m_0=b-h=a$ and similarly we know that $r(x_b)=0$ while $x_a\in L$ and $l(x_a)=0$ and $l'(x_a)=1$. Then we can obtain (1) and (2).

  			Assume that there is an $i_1$ with $h\le i_1\le b-1$ such that $l\cdot r(x_{i_1})=l\cdot r(x_{i_1-h})=1$, since $r(x_i)=0$ for all $x_i\in R$, both $x_{i_1}$ and $x_{i_1-h}$ are in $L$. Since $i_1<b$, $l(x_{i_1})=(j_1+\omega_{p_1})\mod 2$ and $l(x_{i_1-h})=(j_2+\omega_{p_2})\mod 2$. According to \textbf{Lemma \ref{lem.biject}}, we have $p_1=p_2$ and $j_1+1=j_2$. So $l(x_{i_1})\neq l(x_{i_1-h})$ which is conflict with the assumption. So (3) is correct and (4) is correct for the similar reason.
  		\end{proof}
  			
  		\begin{claim}
Consider the case where $m'<b$.
Define $r$ by
\[
r(x_i)=
\left\{
\begin{array}{ll}
0,&\mbox{if $i>b$, and}\\
(j+w_p)\mod 2&\mbox{if $i\le b$,}
\end{array}\right.
\]
where $(j,p)$ is the unique pair such that $k(j,p)=i$,
and $w_p=\omega_p$ if $p\in\st(L,R;a,b)$ and $w_p=0$ if otherwise.
Then the following properties holds:
\begin{enumerate}
\renewcommand{\labelenumi}{(\arabic{enumi})}
\item
$l\cdot r(x_a)\oplus l\cdot r(x_b)=l'\cdot r(x_a)\oplus l'\cdot r(x_b)=1$,
\item
$l\cdot r(x_i)\ne l\cdot r(x_{i-h})$ for all $i\in H$,
\item
$\max\{i|l'\cdot r(x_i)=l'\cdot r(x_{i-h}),\ i\in H\}=m'$
and $l'\cdot r(x_{m'})=1$.
\end{enumerate}
\end{claim}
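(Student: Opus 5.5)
The key observation is that $r$ is built from the same parameters $\vecw$ as $l$. So $l\cdot r$ is the fully ``regular'' assignment: $l\cdot r(x_i)=(j+w_p)\bmod 2$ for every $i\le b$, with $(j,p)=k^{-1}(i)$. The assignment $l'\cdot r$ differs from it only on those $i\in\Lcal$ whose class $p$ satisfies $\omega_p\ne\omega'_p$. I will use Lemma~\ref{lem.biject} throughout: for $i\in H$, the pair $i,i-h$ corresponds to $(j,p)$ and $(j+1,p)$ with the same $p$.

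\textbf{Reading of $m'$.} As the main text already uses when defining $m_0$, I read $m'$ as the largest index $i$ of a differing class $p$ such that $x_i$ and $x_{i-h}$ lie on opposite sides of $(L,R)$. Such $i$ exist because differing classes lie in $\st(L,R;a,b)$. This is the step I expect to need the most care: the literal definition would just give $k(0,p)$ for the smallest differing $p$, which need not be a split position. So I will state this reading explicitly at the start.

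\textbf{Property (2).} For $i\in H$ we have $l\cdot r(x_i)=(j+w_p)\bmod 2$ and $l\cdot r(x_{i-h})=(j+1+w_p)\bmod 2$. These always differ.

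\textbf{Property (1).} Note $b=k(0,0)$ and $a=b-h=k(1,0)$. Under $l\cdot r$ they get $w_0$ and $1+w_0$, so their XOR is $1$. Under $l'\cdot r$ the same holds unless $\omega_0\ne\omega'_0$ and $x_a,x_b$ are on different sides. But in that case $b$ itself would be a candidate for $m'$, giving $m'=b$ and contradicting $m'<b$. Hence the XOR is $1$ for $l'\cdot r$ as well.

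\textbf{Property (3).} Take any $i\in H$ with class $p$.
\begin{itemize}
\item If $\omega_p=\omega'_p$, then $l'\cdot r$ is regular on $x_i$ and $x_{i-h}$, so they differ.
\item If $\omega_p\ne\omega'_p$ and $x_i,x_{i-h}$ are on the same side, they are both shifted or both unshifted, so they still differ.
\item Equality occurs exactly when $\omega_p\ne\omega'_p$ and $x_i,x_{i-h}$ are split. Then one value is complemented and the other is not.
\end{itemize}
So the maximum of $\{i\in H\mid l'\cdot r(x_i)=l'\cdot r(x_{i-h})\}$ is exactly $m'$. Here $m'\in H$ because $m'<b$ and $m'-h=k(j+1,p)\ge 0$.

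\textbf{Value $l'\cdot r(x_{m'})=1$.} We have WLOG $l(x_{m_0})=0$ and $l'(x_{m_0})=1$.
\begin{itemize}
\item If $x_{m'}\in L$, then $m_0=m'$ and $l'\cdot r(x_{m'})=l'(x_{m'})=1$ directly.
\item Otherwise $m_0=m'-h\in L$ and $x_{m'}\in R$. Then $r(x_{m'})$ is regular with $\omega_p$, so it is the complement of the regular value $l(x_{m'-h})=0$. Hence $r(x_{m'})=1$, and $l'\cdot r(x_{m'})=1$.
\end{itemize}
This completes the plan.
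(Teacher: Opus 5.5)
Your proof is correct and follows essentially the same route as the paper: the same regular-form comparison with consecutive $j$ for (2), the observation $b=k(0,0)$, $a=k(1,0)$ for (1), and for (3) the classification of equalities of $l'\cdot r$ as exactly the split positions of classes with $\omega_p\neq\omega'_p$, finishing via $m_0$. Your explicit reading of $m'$ as the largest split position of a differing class is what the paper implicitly relies on, both when it defines $m_0$ and when it equates the maximum in (3) with $m'$; your case analysis for (1) is also more careful than the paper's, which asserts that $x_a$ and $x_b$ must lie on the same side even though they may be split with $\omega_0=\omega'_0$.
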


  		\begin{proof}
  			$x_a$ and $x_b$ must be in the same partition because of the definition of $m'$ and $m'<b$. If $x_a$ and $x_b$ are in $L$, $l(x_a)\neq l(x_b)$ which implies that $l(x_a)\oplus l(x_b)=1$. If $x_a$ and $x_b$ are in $R$, since the expression of $r$ is the same with $l$, we also have $r(x_a)\oplus r(x_b)=1$. Therefore $l\cdot r(x_a)\oplus l\cdot r(x_b)=1$. Similarly, we can say $l'\cdot r(x_a)\oplus l'\cdot r(x_b)=1$. We finished the proof of (1).

  			Then we prove (2). Since the expression of $l$ and $r$ are the same, no matter which partition $x_i$ and $x_{i-h}$ are in, $p_1=p_2$ and $j_1+1=j_2$ where $i=k(j_1,p_1)$ and $i-h=k(j_2,p_2)$. So $l\cdot r(x_i)=(j_1+\omega_{p_1})\mod 2\neq (j_2+\omega_{p_2})\mod 2=l\cdot r(x_{i-h})$. It implies (2).

  			Last, we show (3) is correct. Because of the analysis above, $r(x_i)\neq r(x_{i-h})$ for all $x_i,x_{x_i-h}\in R$ with $i\le b-1$. Similarly, $l'(x_i) \neq l'(x_{i-h})$ for all $x_i,x_{i-h}\in L$ with $i\le b-1$. Therefore the necessary condition of $l'\cdot r(x_i) = l'\cdot r(x_{i-h})$ is that $x_i$ and $x_{i-h}$ are in the different partition. We can assume that $x_i\in L$ and $x_{i-h}\in R$ without lost of generality. Then $l'(x_i)=(j_1+\omega'_{p_1})\mod 2$ and $r(x_{i-h})=(j_2+\omega_{p_2})\mod 2$. Similarly we have $j_1+1=j+2$ and $p_1=p_2$. If $l'(x_i)=r(x_{i-h})$, then $\omega'_{p_1}\neq \omega_{p_1}$. Thus $\max\{i|l'\cdot r(x_i)=l'\cdot r(x_{i-h}),\ i\in H \}=\max\{i|i=k(j,p), \omega_p\neq\omega'_p\}=m'$. If $x_{m'}\in L$, then $m_0=m'$ and $l'\cdot r(x_{m'})=l'(x_{m_0})=1$. Otherwise $m_0=m'-h$ and $l'\cdot r(x_{m'})=r(x_{m'})=l'(x_{m'-h})=l'(x_{m_0})=1$.
		\end{proof}